\begin{document}

\newtheorem{theorem}{Theorem}
\newtheorem{corollary}{Corollary}
\newtheorem{proposition}{Proposition}
\newtheorem{lemma}[theorem]{Lemma}
\newtheorem{assumption}{Assumption}
\newtheorem{definition}{Definition}
\newtheorem{example}{Example}

\newenvironment{proof}[1][\scshape{Proof:}]
{\begin{trivlist} \item[\hskip \labelsep {\small #1}]}
{\begin{flushright}$\blacksquare$\end{flushright}\end{trivlist}}

\def\vW{ {\vec{W}} }
\def\vw{ {\vec{w}} }
\def\vy{ {\vec{y}} }
\def\vx{ {\vec{x}} }

\def\DD{ D^* }
\def\NS{ {N_S} }
\def\GS{ {G_S} }
\def\GSprime{ {G_{S'}} }
\def\Umin{ {\text{U-min} } }
\def\stop{{ T(U,V,\vw)} }

\def\Gopt{ {G_{\text{opt}} } }

\def\Rv{ {R^{(v)}} }
\def\Wv{ W^{(v)} }
\def\Gv{ {G^{(v)}} }
\def\Gvi{ {G^{(v_i)}} }
\def\Gvone{ {G^{(v_1)}} }

\def\deltav{ {\delta^{(v)}} }
\def\gammav{ {\gamma^{(v)}} }
\def\gammavi{ {\gamma^{(v_i)}} }

\def\ne{ {n_e} }

\def\HG{ {H_G} }

\def\LambdaG{ {\Lambda_G} }
\def\LambdaGv{ {\Lambda_G^{(v)}} }
\def\LambdaGvone{ {\Lambda_G^{(v_1)}} }
\def\LambdaGopt{ \Lambda_{G_\text{opt}} } 
\def\IGopt{ I_{G_\text{opt}} } 

\def\A{ {\mathbb{A} }}
\def\N{ {\mathbb{N} }}
\def\R{ {\mathbb{R} }}

\def\TURN{ { \gamma } }
\def\E{{E}}
\def\P{{P}}
\def\T{{T_{\epsilon,k}}}
\def\T{{T^\epsilon_k}}
\def\U{U^\epsilon}
\def\W{W^\epsilon}
\def\p1{ {g} }
\def\lplus{l^+}
\def\lminus{l^-}
\def\norm{ \eta }

\def\SIZE{ {N_k(l) } }
\def\SIZEp{ {N_k(\lplus) } }
\def\SIZEm{ {N_k(\lminus) } }
\def\SIZEt{ {N_k(l^*) } }
\def\SIZEemax{ {N_k(l^{\epsilon, *, (k)}} }
\def\SIZEmax{ {N_k(l^{*, (k)}) } }

\def\Na{ {n_k(w,a)} }
\def\La{ {L_k} }
\def\Le{ {L_{\epsilon, k}} }
\def\Les{ {L_{\epsilon}} }

\sloppy

%\title{Quantifying the computational security of multi-user systems}
\title{Multi-user guesswork and brute force security}

\author{
    \IEEEauthorblockN{Mark M. Christiansen and Ken R. Duffy}
   \IEEEauthorblockA{Hamilton Institute\\
     National University of Ireland Maynooth\\
     Email: \{mark.christiansen, ken.duffy\}@nuim.ie}\\
  \and
\IEEEauthorblockN{Fl\'avio du Pin Calmon and Muriel M\'edard }
  \IEEEauthorblockA{Research Laboratory of Electronics\\
Massachusetts Institute of Technology\\
     Email: \{flavio, medard\}@mit.edu}
  
}

%% Create the title:
\maketitle

\begin{abstract}

The Guesswork problem was originally motivated by a desire to
quantify computational security for single user systems. Leveraging
recent results from its analysis, we extend the remit and utility
of the framework to the quantification of the computational security
of multi-user systems. In particular, assume that $V$ users
independently select strings stochastically from a finite, but
potentially large, list. An inquisitor who does not know which
strings have been selected wishes to identify $U$ of them. The
inquisitor knows the selection probabilities of each user and is
equipped with a method that enables the testing of each (user,
string) pair, one at a time, for whether that string had been
selected by that user.

Here we establish that, unless $U=V$, there is no general strategy
that minimizes the distribution of the number of guesses, but in
the asymptote as the strings become long we prove the following:
by construction, there is an asymptotically optimal class of
strategies; the number of guesses required in an asymptotically
optimal strategy satisfies a Large Deviation Principle with a rate
function, which is not necessarily convex, that can be determined
from the rate functions of optimally guessing individual users'
strings; if all users' selection statistics are identical, the
exponential growth rate of the average guesswork as the string-length
increases is determined by the specific R\'enyi entropy of the
string-source with parameter $(V-U+1)/(V-U+2)$, generalizing the known
$V=U=1$ case; and that the Shannon entropy of the source is a lower
bound on the average guesswork growth rate for all $U$ and $V$,
thus providing a bound on computational security for multi-user
systems. Examples are presented to illustrate these results
and their ramifications for systems design.

\end{abstract}

\section{Introduction}
\let\thefootnote\relax\footnote{
F.d.P.C. sponsored by the Department of Defense under Air
Force Contract FA8721-05-C-0002. Opinions, interpretations,
recommendations, and conclusions are those of the authors and are
not necessarily endorsed by the United States Government. Specifically,
this work was supported by Information Systems of ASD(R\&E). M.M.
and K.D. were supported in part by a Netapp faculty fellowship.}

The security of systems is often predicated on a user or application
selecting an object, a password or key, from a large list. If an
inquisitor who wishes to identify the object in order to gain access
to a system can only query each possibility, one at a time, then
the number of guesses they must make in order to identify the
selected object is likely to be large. If the object is selected
uniformly at random using, for example, a cryptographically secure
pseudo-random number generator, then the analysis of the distribution
of the number of guesses that the inquisitor must make is trivial.

Since the earliest days of code-breaking, deviations from perfect
uniformity have been exploited. For example, it has long since been
known that human-user selected passwords are highly non-uniformly
selected, e.g. \cite{malone12}, and this forms the basis of dictionary
attacks. In information theoretic security, uniformity of the string
source is typically assumed on the basis that the source has been
compressed. Recent work has cast some doubt on the appropriateness
of that assumption by establishing that fewer queries are required
to identify strings chosen from a typical set than one would
expect by a na\"ive application of the asymptotic equipartition
property. This arises by exploitation of the mild non-uniformity
of the distribution of strings conditioned to be in the typical set
\cite{Christiansen13a}.

If the string has not been selected perfectly uniformly, but with
a distribution that is known to the inquisitor, then the quantification
of security is relatively involved. Assume that a string, $W_1$,
is selected stochastically from a finite list, $\A = \{0,\ldots,m-1\}$.
An inquisitor who knows the selection probabilities, $\P(W_1=w)$
for all $w\in\A$, is equipped with a method to test one string
at a time and develops a strategy,
$G:\A\mapsto\{1,\ldots,m\}$,
that defines the order in which strings are guessed. As the string
is stochastically selected, the number of queries,
$G(W_1)$,
that must be made before it is identified correctly is also
a random variable, dubbed guesswork. Analysis of the distribution
of guesswork serves as a natural a measure of computational security
in brute force determination.

In a brief paper in 1994, Massey \cite{Massey94} established that
if the inquisitor orders his guesses from most likely to least
likely, then the Shannon entropy of the random variable $W_1$ bears
little relation to
the expected guesswork
$E(G(W_1))= \sum_{w\in\A} G(w) P(W_1=w)$,
the average number of guesses required to identify 
$W_1$.
Arikan \cite{Arikan96} established
that if a string, $W_k$, is chosen from $\A^k$ with i.i.d. characters,
again guessing strings from most likely to least likely, then the
moments of the guesswork distribution grow exponentially in $k$
with a rate identified in terms of the R\'enyi entropy of the
characters, 
\begin{align*}
\lim_{k\to\infty} \frac 1k \log E(G(W_k)^\alpha)
	&= (1+\alpha) \log \sum_{w\in\A} P(W_1=w)^{1/(1+\alpha)}\\
	&= \alpha R\left(\frac{1}{1+\alpha}\right) 
	\text{ for } \alpha>0,
\end{align*}
where $R((1+\alpha)^{-1})$ is the R\'enyi entropy of $W_1$ with
parameter $(1+\alpha)^{-1}$.
In particular, the average guesswork grows as the R\'enyi entropy
with parameter $1/2$, a value that is 
lower bounded by Shannon entropy.

Arikan's result was subsequently extended significantly beyond
i.i.d. sources \cite{Malone042,Pfister04,Hanawal11}, establishing
its robustness. In the generalized setting, specific R\'enyi entropy,
the R\'enyi entropy per character, plays the r\^ole of R\'enyi
entropy. In turn, these results have been leveraged to prove that
the guesswork process $\{k^{-1}\log G(W_k)\}$ 
satisfies a Large Deviation Principle (LDP),
e.g. \cite{Lewis95,Dembo98},
in broad
generality \cite{Christiansen13}.
That is, there exists a lower semi-continuous function
$I:[0,\log(m)]\mapsto[0,\infty]$ such that for all
Borel sets $B$ contained in $[0,\log(m)]$
\begin{align}
-\inf_{x\in B^\circ} I(x) &\leq
        \liminf_{n\to\infty} \frac 1k \log 
	P\left(\frac 1k \log G(W_k) \in B\right) \nonumber\\
&\leq
\limsup_{n\to\infty} \frac 1k \log 
	P\left(\frac 1k \log G(W_k) \in B\right) \nonumber\\
        &\leq -\inf_{x\in \bar{B}} I(x),
	\label{eq:LDP}
\end{align}
where $B^\circ$ denotes the interior of $B$ and $\bar{B}$ denotes
its closure. Roughly speaking, this implies $dP(k^{-1} \log G(W_k)
\approx x)\approx \exp(-k I(x)) dx$ for large $k$. In \cite{Christiansen13}
this LDP, in turn, was used to provide direct estimates on the
guesswork probability mass function, $P(G(W_k)=n)$ for
$n\in\{1,\ldots,m^k\}$.  These deductions, along with others described
in Section \ref{sec:brief}, have developed a quantitative framework
for the process of brute force guessing a single string.

In the present work we address a natural extension in this investigation
of brute force searching: the quantification for multi-user systems.
We are motivated by both classical systems, such as the brute force
entry to a multi-user computer where the inquisitor need only
compromise a single account, as well as modern distributed storage
services where coded data is kept at distinct sites in a way where,
owing to coding redundancy, several, but not all, servers need to be
compromised to access the content \cite{oliveira12,Calmon12}.

\section{Summary of contribution}

Assume that $V$ users select strings independently from $\A^k$. An
inquisitor knows the probabilities with which each user selects
their string, is able to query the correctness of 
each 
(user, string)
pair, and wishes to identify any 
subset of size
$U$ of the $V$ strings. The first
question that must be addressed is what is the optimal strategy, the
ordering in which (user, string) pairs are guessed, for
the inquisitor. For the single user system, since the earliest
investigations \cite{Massey94,Arikan96,Merhav99,Pliam00} it has
been clear that the strategy of ordering guesses from 
the
most to least likely string, breaking ties arbitrarily, is optimal
in any reasonable sense. Here we shall give optimality a specific
meaning: that the distribution of the number of guesses required
to identify the unknown object is stochastically dominated by all
other strategies. Amongst other results, for the multi-user guesswork
problem we establish the following:

\begin{itemize}
\item If $U<V$, the existence of optimal guessing strategies,
those that are stochastically dominated by all other strategies, is
no longer assured.
\item By construction, there exist asymptotically optimal strategies
as the strings become long.
\item For asymptotically optimal strategies, we prove a large
deviation principle for their guesswork. The resulting large deviations 
rate function is, in general, not convex and so this result could
not have been established by determining how the moment generating
function of the multi-user guesswork distribution scales in
string-length.
\item
The non-convexity of the rate function shows that, if users' string
statistics are distinct, there may be no fixed ordering of weakness
amongst users. That is, depending on how many guesses are made
before the $U$ users' strings are identified, the collection of
users whose strings have been identified are likely to be
distinct.
\item If all $V$ strings are chosen with the same statistics, then
the rate function is convex and the exponential growth rate of the
average guesswork as string-length increases is the specific R\'enyi
entropy of the string source with parameter
\begin{align*}
\frac{V-U+1}{V-U+2} 
\in\left\{\frac 12,\frac 23,\frac 34,\frac4 5,\frac 56,\ldots\right\}.
\end{align*}
\item For homogeneous users, from an inquisitor's point of view,
there is a law of diminishing returns for the expected guesswork
growth rate in excess number of users ($V-U$).
\item For homogeneous users, from a designer's point of view, coming full
circle to Massey's original observation that Shannon entropy has
little quantitative relationship to how hard it is to guess a single
string, the specific Shannon entropy of the source is a lower bound
on the average guesswork growth rate for all $V$ and $U$.
\end{itemize}

These results generalize both the original guesswork studies, where
$U=V=1$, as well as some of the results in \cite{Merhav99,Hanawal11a}
where, as a wiretap model, the case $U=1$ and $V=2$ with one of the
strings selected uniformly, is considered and scaling properties
of the guesswork moments are established. Interestingly, we shall
show that that setting is one where the LDP rate function is typically
non-convex, so while results regarding the asymptotic behavior of
the guesswork moments can be deduced from the LDP, the reverse is
not true. To circumvent the lack of convexity, we prove the main
result using the contraction principle, Theorem 4.2.1 \cite{Dembo98},
and the LDP established in \cite{Christiansen13}, which itself
relies on earlier results of work referenced above.

\begin{figure}
\includegraphics[scale=0.46]{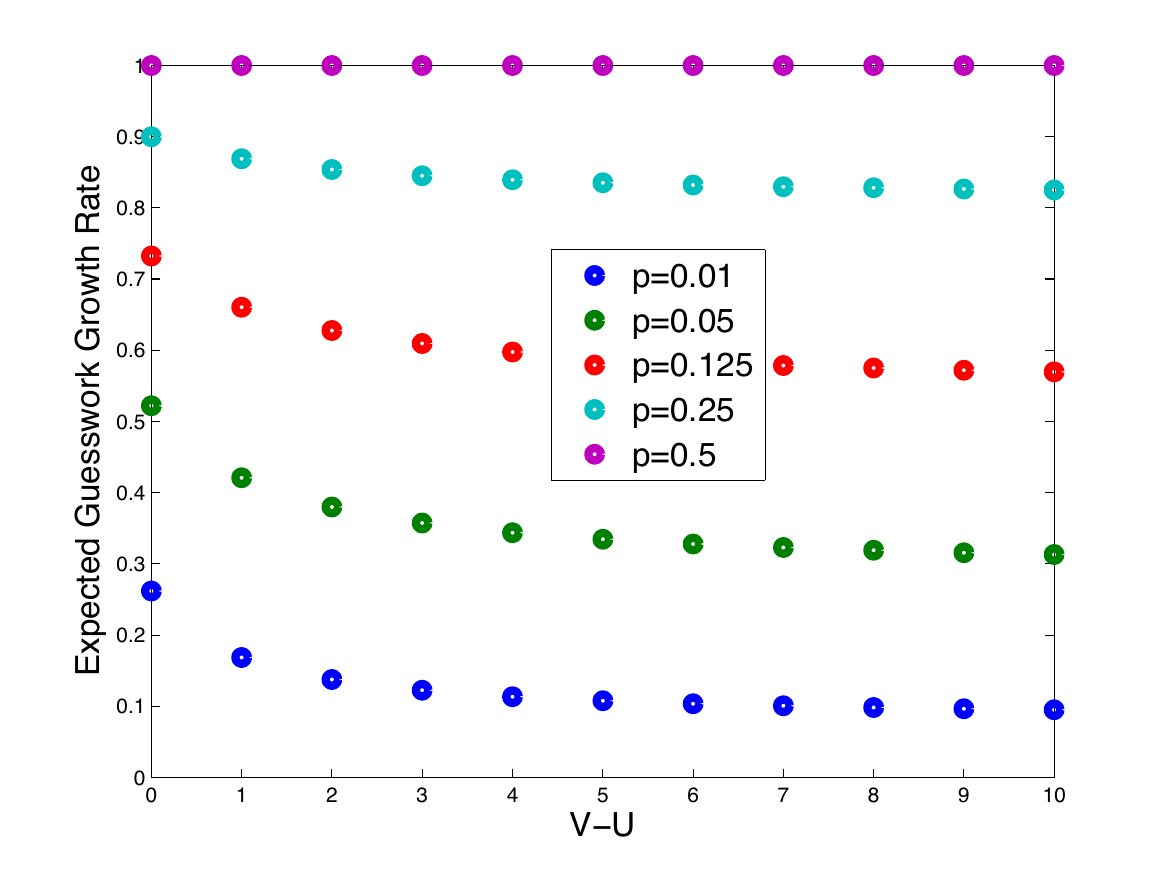}
\includegraphics[scale=0.46]{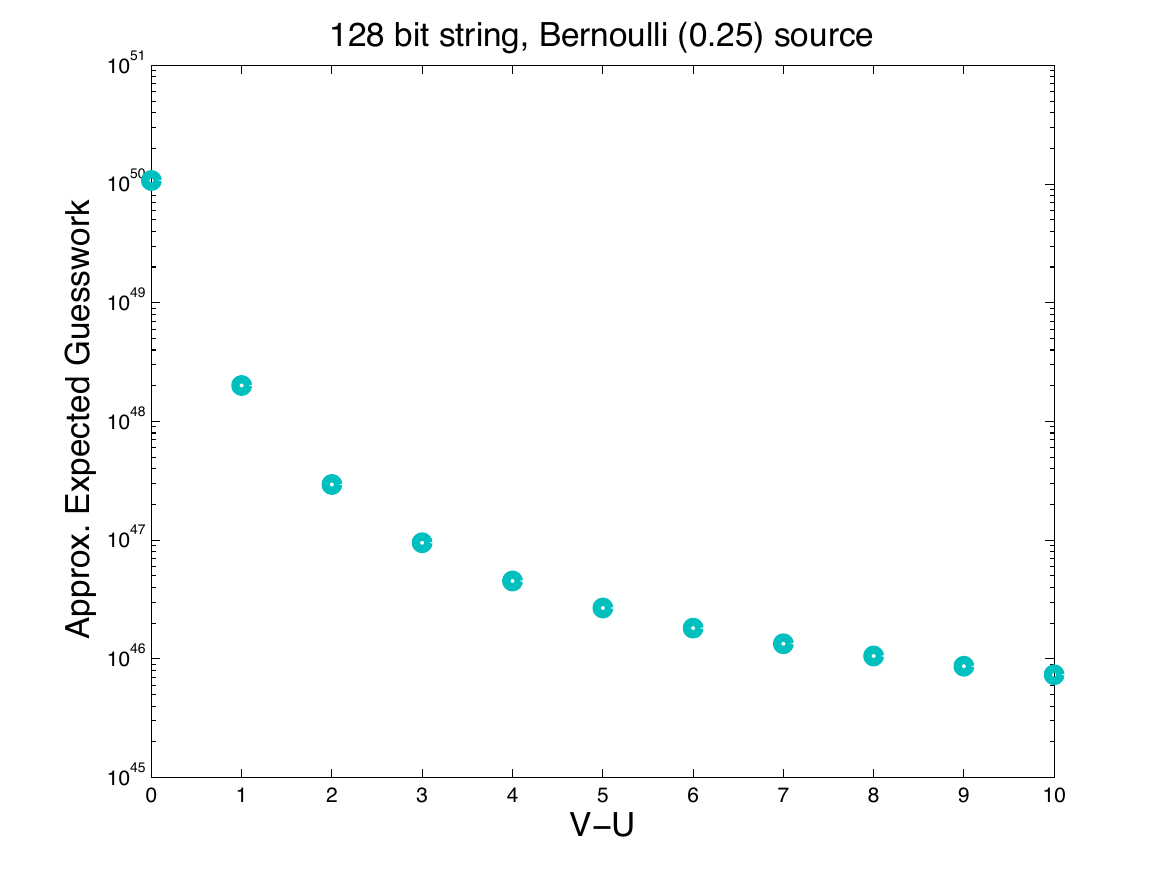}
\caption{Strings created from i.i.d. letters are selected from a
binary alphabet with probability $p$ for one character. Given an
inquisitor wishes to identify $U$ of $V$ strings, the left panel
shows the average exponential guesswork growth rate as a function
of $V-U$, the excess number of guessable strings; the right panel
shows the theoretically predicted approximate average guesswork for
$168$ bit strings, as used in triple DES, as a function of $V-U$,
the excess number of guessable strings.
}
\label{fig:bernoulli_guess}
\end{figure}

\section{
The impact of the number of users on expected guesswork growth rate,
an example
}
\label{sec:firstexample}

As an exemplar that illustrates the reduction in security that comes
from having multiple users, the left panel in Figure
\ref{fig:bernoulli_guess} the average guesswork growth rate for an
asymptotically optimal strategy is plotted for the simplest case,
a binary alphabet with $V$ i.i.d.  Bernoulli string sources. In
order to be satisfied, the inquisitor wishes to identify $U\leq V$
of the strings. The x-axis shows the excess number of guessable
strings, $V-U$, and the y-axis is the $\log_2$ growth rate of the
expected guesswork in string length.  If the source is perfectly
uniform (i.e. characters are chosen with a Bernoulli $1/2$ process),
then the average guesswork growth rate is maximal and unchanging
in $V-U$. If the source is not perfectly uniform, then the growth
rate decreases as the number of excess guessable strings $V-U$
increases, with a lower bound of the source's Shannon entropy.

For a string of length $168$ bits, as used in the triple DES cipher,
and a Bernoulli $(0.25)$ source, the right panel in Figure
\ref{fig:bernoulli_guess} displays the impact that the change in
this exponent has, approximately, on the average number of guesses
required to determine $U$ strings. More refined results for a
broader class of processes can be found in later sections, including
an estimate on the guesswork distribution.

The rest of this paper is organized as follows. In Section
\ref{sec:brief}, we begin with a brief overview of results on
guesswork that we have not touched on so far. Questions of optimal
strategy are considered in Section \ref{sec:strategy}. Asymptotically
optimal strategies are established to exist in Section \ref{sec:asymptote}
and results for these strategies appear in Section \ref{sec:results}.
In Section \ref{sec:mismatch} we present examples where strings
sources have distinct statistics. In Section \ref{sec:ident} we
return to the setting where string sources have identical statistics.
Concluding remarks appear in Section \ref{sec:conc}.

\section{A brief overview of guesswork}
\label{sec:brief}
Since Arikan's introduction of the long string length asymptotic,
several generalizations of its fundamental assumptions have been
explored. Arikan and Boztas \cite{Arikan02} investigate the setting
where the truthfulness in response to a query is not certain. Arikan
and Merhav \cite{Arikan98} loosen the assumption that inquisitor
needs to determine the string exactly, assuming instead that they
only need to identify it within a given distance. That the inquisitor
knows the distribution of words exactly is relaxed by Sundaresan
\cite{Sundaresan07b}, \cite{Sundaresan06} 
and by the authors of \cite{Beirami15}.

Motivated by a wiretap application, the problem of multiple users
was first investigated by Merhav and Arikan \cite{Merhav99} in the
$V=2$ and $U=1$ setting, assuming one of the users selects 
their string uniformly on a reduced alphabet. In \cite{Hayashi06}
Hayashi and Yamamoto extend the results in \cite{Merhav99} to the
case if there is an additional i.i.d. source correlated to the
first, used for coding purposes, while Harountunian and Ghazaryan
\cite{Haroutunian00} extend the results in \cite{Merhav99} to the
setting of \cite{Arikan98}.  Harountunian and Margaryan
\cite{Haroutunian12} expand on \cite{Merhav99} by adding noise to
the original string, altering the distribution of letters. Hanawal
and Sundaresan \cite{Hanawal11a} extend the bounds in \cite{Merhav99}
to a pre-limit and to more general sources, showing that they are
tight for Markovian and unifilar sources.

Sundaresan \cite{Sundaresan07a} uses length functions to identify
the link between guesswork and compression. This result is extended
by Hanawal and Sundaresan \cite{Hanawal11b} to relate guesswork to
the compression of a source over a countably infinite alphabet. In
\cite{Christiansen13a} the authors prove that, if the string is
conditioned on being an element of a typical set the expected
guesswork, is growing 
more slowly
than a simple uniform approximation
would suggest. In \cite{Christiansen13b} the authors consider the
impact of guessing over a noisy erasure channel showing that the
mean noise on the channel is not the significant moment in determining
the expected guesswork, but instead one determined by its R\'enyi
entropy with parameter $1/2$.
Finally, we mention that recent work by Bunte and Lapidoth
\cite{Bunte14} identifies a distinct operational meaning for R\'enyi
entropy in defining a rate region for a scheduling problem.

\section{Optimal strategies}
\label{sec:strategy}
In order to introduce the key concepts used to determine the optimal
multi-user guesswork strategy, we first reconsider the optimal guesswork
strategy in the single user case, i.e.
$U=V=1$.
Recall that $\A=\{0,\ldots,m-1\}$ is a finite set.
  
\begin{definition}
A single user strategy, $S:\A^k\mapsto\{1,\ldots,m^k\}$, is a
one-to-one map that determines the order in which guesses are made.
That is, for a given strategy $S$ and a given string $w\in\A^k$,
$S(w)$ is the number of guesses made until $w$ is queried.
\end{definition}

Let $W_k$ be a random variable taking values in $\A^k$. Assume that
its probability mass function, $P(W_k=w)$ for all $w\in\A^k$, is
known. Since the first results on the topic it has been clear that
the best strategy, which we denote $G$, is to guess from most likely
to least likely, breaking ties arbitrarily. In particular, $G$ is
defined by $G(w)<G(w')$ if $P(W_k=w)>P(W_k=w')$. We
begin by assigning optimality a precise meaning in terms of stochastic
dominance \cite{Lehmann55,denuit06}.

\begin{definition}
A strategy $S$ is optimal for $W_k$ if the random variable $S(W_k)$
is stochastically dominated by $S'(W_k)$, for all strategies $S'$.
That is, if $P(S(W_k)\leq n) \geq P(S'(W_k)\leq n)$ for all strategies
$S'$ and all $n\in\{1,\ldots,m^k\}$.  
\end{definition}

This definition captures the stochastic aspect of guessing by stating
than an optimal strategy is one where the identification stopping
time is probabilistically smallest. One consequence of this definition
that explains its appropriateness is that for any monotone function
$\phi:\{1,\ldots,m^k\}\to\R$, it is the case that $E(\phi(S(W_k)))\leq
E(\phi(S'(W_k)))$ for an optimal $S$ and any other $S'$ (e.g.
Proposition 3.3.17, \cite{denuit06}). Thus $S(W_k)$ has the least
moments over all guessing strategies.
That guessing from most- to least-likely in the single user case is
optimal is readily established.
\begin{lemma}
\label{lem:1opt}
If $V=U=1$, the optimal strategies are those that guess from most
likely to least likely, breaking ties arbitrarily.
\end{lemma}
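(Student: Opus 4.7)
The plan is to exploit the very simple structure of the guesswork cumulative distribution function: for any strategy $S$ and any $n\in\{1,\ldots,m^k\}$,
\begin{equation*}
P(S(W_k)\leq n) = \sum_{w\in A_S(n)} P(W_k=w),
\end{equation*}
where $A_S(n):=\{w\in\A^k : S(w)\leq n\}$ is a set of exactly $n$ distinct strings, since $S$ is a bijection. So establishing stochastic dominance of $G(W_k)$ reduces to showing, for each fixed $n$, that $G$ chooses the $n$ strings in $A_G(n)$ so as to maximize the sum of their probability masses.

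First I would handle the ``$G$ is optimal'' direction by a direct rearrangement argument. Enumerate $\A^k$ as $w_1,w_2,\ldots,w_{m^k}$ in a way consistent with $G$, so that $P(W_k=w_1)\geq P(W_k=w_2)\geq\cdots\geq P(W_k=w_{m^k})$, with ties broken in whatever way $G$ uses. Then $A_G(n)=\{w_1,\ldots,w_n\}$ consists of the $n$ most likely strings. For any other strategy $S'$, the set $A_{S'}(n)$ also has cardinality $n$, so one can write
\begin{equation*}
\sum_{w\in A_G(n)} P(W_k=w) - \sum_{w\in A_{S'}(n)} P(W_k=w) = \sum_{w\in A_G(n)\setminus A_{S'}(n)}\!\!P(W_k=w) - \sum_{w\in A_{S'}(n)\setminus A_G(n)}\!\!P(W_k=w),
\end{equation*}
and observe that $|A_G(n)\setminus A_{S'}(n)|=|A_{S'}(n)\setminus A_G(n)|$, while every probability in the first set is at least as large as every probability in the second set (because the first set sits among the top $n$ and the second set contains strings outside the top $n$). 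Summing gives the required nonnegativity $P(G(W_k)\leq n)\geq P(S'(W_k)\leq n)$, i.e. $G(W_k)$ is stochastically dominated by $S'(W_k)$.

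For the converse direction I would show that any optimal strategy $S$ must, up to breaking of ties, list strings in nonincreasing order of probability. Suppose, for contradiction, that there exist strings $w$ and $w'$ with $P(W_k=w)>P(W_k=w')$ but $S(w')<S(w)$, and let $n=S(w')$. Then $w'\in A_S(n)$ while $w\notin A_S(n)$, so swapping $w$ and $w'$ in the strategy yields a new strategy $S''$ with $A_{S''}(n)=(A_S(n)\setminus\{w'\})\cup\{w\}$ and $A_{S''}(j)=A_S(j)$ for $j\neq n$. This gives $P(S''(W_k)\leq n) - P(S(W_k)\leq n)= P(W_k=w)-P(W_k=w')>0$, contradicting optimality of $S$. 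Hence $S$ must guess in nonincreasing order of probability, which is the claim. I do not foresee a substantive obstacle; the only point requiring a little care is the bookkeeping for ties in the forward direction, which is handled by the phrase ``breaking ties arbitrarily'' in the statement.
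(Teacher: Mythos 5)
Your proof is correct, and the forward direction --- identifying $P(G(W_k)\leq n)$ as the maximum over all $n$-element subsets of $\A^k$ of their total probability mass --- is exactly the argument the paper gives. You additionally prove the converse (that every optimal strategy must order strings by nonincreasing probability) via a swap argument, which the paper leaves implicit; the only slip is that your claim $A_{S''}(j)=A_S(j)$ for all $j\neq n$ fails for $j$ strictly between $S(w')$ and $S(w)$, but this is harmless since the contradiction uses only $j=n$.
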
 
\begin{proof}
Consider the strategy $G$ defined above and any other strategy $S$.
By construction, for any $n\in\{1,\ldots,m^k\}$
\begin{align*}
P(G(W_k)\leq n) &= \sum_{i=1}^n P(G(W_k)=i) \\
	& = \max_{w_1,\ldots,w_n} \left(\sum_{i=1}^n P(W_k=w_i)\right)\\
	&\geq \sum_{i=1}^n P(S(W_k)=i) = P(S(W_k)\leq n).
\end{align*}
\end{proof}

In the multi-user case, where (user, string) pairs are queried, a
strategy is defined by the following.
\begin{definition}
A multi-user strategy is a one-to-one map
$S:\{1,\ldots,V\}\times\A^k\mapsto\{1,\ldots,Vm^k\}$ that orders
the guesses of (user, string) pairs.
\end{definition}

The expression for the number of guesses required to identify $U$
strings is a little involved as we must take into account that we
stop making queries about a user once 
their string has been identified.
For a given strategy $S$, let
$\NS:\{1,\ldots,V\}\times\{1,\ldots,Vm^k\}\mapsto\{1,\ldots,m^k\}$ be
defined by
\begin{align*}
\NS(v,n) = |\{w\in\A^k:S(v,w)\leq n\}|,
\end{align*}
which computes the number of queries in the strategy up to $n$
that correspond to user $v$.

The number of queries that need to be made if $U$ strings are
to be identified is
\begin{align*}
\stop = \Umin\left(S(1,w^{(1)}),\ldots,S(V,w^{(V)})\right),
\end{align*}
where $\Umin:\R^V\to\R$ and $\Umin(\vx)$ gives the $U^{\rm th}$
smallest component of $\vx$. 
The 
number of guesses required to identify $U$ components of $\vw
=(w^{(1)},\ldots,w^{(V)})$ is then
\begin{align}
\label{eq:GS}
\GS(U,V,\vw)=
\sum_{v=1}^V \NS\left(v, 
	\min\left(S(v,w^{(v)}),\stop\right) \right).
\end{align}
This apparently unwieldy object counts the number of queries made
to each user, curtailed either when 
their
string is identified or
when $U$ strings of other users are identified.

If $U=V$, equation \eqref{eq:GS} simplifies significantly, as 
$S(v,w^{(v)})\leq\stop$ for all
$v\in\{1,\ldots,V\}$, becoming 
\begin{align}
\label{eq:U=V}
\GS(V,V,\vw)=
\sum_{v=1}^V \NS\left(v, S(v,w^{(v)}) \right),
\end{align}
the sum of the number of queries required to identify each individual
word. In this case, we have the analogous result to Lemma \ref{lem:1opt},
which is again readily established.
\begin{lemma}
\label{lem:UVopt}
If $V=U$, the optimal strategies are those that employ individual
optimal strategies, but with users selected in any order.
\end{lemma}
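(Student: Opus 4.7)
The plan is to reduce the $U=V$ case to $V$ parallel instances of the single-user problem. When $U=V$, every user's string must be identified, so $S(v,w^{(v)}) \leq \stop$ for each $v$, which is precisely the simplification recorded in \eqref{eq:U=V}. The quantity $\NS(v,S(v,w^{(v)}))$ is just the rank of $w^{(v)}$ among the strings queried for user $v$, so it depends only on the order in which user $v$'s strings are attempted, not on how those queries are interleaved with queries to other users.

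Concretely, I would define, for each multi-user strategy $S$ and each user $v$, the induced single-user strategy $S_v:\A^k\mapsto\{1,\ldots,m^k\}$ by $S_v(w) = \NS(v,S(v,w))$. This is a bijection because $S$ restricted to user $v$ enumerates $\A^k$ in some order, and $S_v(w)$ records the position of $w$ in that order. The simplified identity then reads
\begin{align*}
\GS(V,V,\vw) = \sum_{v=1}^V S_v(w^{(v)}).
\end{align*}
Since the users' strings $W_k^{(1)},\ldots,W_k^{(V)}$ are independent, the summands $S_v(W_k^{(v)})$ are independent random variables whose marginal distributions are determined entirely by the induced single-user strategies $S_v$.

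From here, I would invoke Lemma \ref{lem:1opt}: for each $v$, the most-to-least-likely single-user strategy $G_v$ satisfies $G_v(W_k^{(v)}) \leq_{\text{st}} S_v(W_k^{(v)})$ for any competing $S_v$. Stochastic dominance is preserved under convolution of independent random variables (a standard fact about the usual stochastic order, e.g.\ via coupling: couple each pair $(G_v(W_k^{(v)}),S_v(W_k^{(v)}))$ so the first is pointwise no larger, then sum). Hence
\begin{align*}
\sum_{v=1}^V G_v(W_k^{(v)}) \;\leq_{\text{st}}\; \sum_{v=1}^V S_v(W_k^{(v)}),
\end{align*}
which establishes the optimality of any multi-user strategy whose induced $S_v$ are each individually optimal. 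To finish, I would observe that any interleaving of users that respects each $S_v=G_v$ produces the same distribution for $\GS(V,V,\vW_k)$, because the right-hand sum above is order-free in $v$; thus arbitrary user-selection orders are allowed.

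The only mildly delicate step is the convolution-preserves-stochastic-dominance claim, but it follows immediately from the coupling characterization of the usual stochastic order, so no substantial obstacle arises. The remainder of the argument is bookkeeping to verify that the induced $S_v$ really is a valid single-user strategy and that \eqref{eq:U=V} decouples the summands across the independent $W_k^{(v)}$.
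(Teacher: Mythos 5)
Your proposal is correct and follows essentially the same route as the paper: invoke the simplification \eqref{eq:U=V}, recognize each summand $\NS(v,S(v,w^{(v)}))$ as an induced single-user strategy, apply Lemma \ref{lem:1opt} per user, and sum. You are in fact slightly more careful than the paper in making explicit the step that the usual stochastic order is preserved under sums of independent random variables (via coupling), which the paper's proof leaves implicit.
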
 
\begin{proof}
For any multi-user strategy $S$, equation \eqref{eq:U=V} holds.
Consider an element in the sum on the right hand side,
$\NS\left(v, S(v,w^{(v)}) \right)$.
It can be recognized to be the number of queries made to user $v$
until their string is identified. By Lemma \ref{lem:1opt}, for each
user $v$, for any $S$ this stochastically dominates the equivalent
single user optimal strategy. Thus the multi-user optimal strategies
in this case are the sum of individual user optimal strategies,
with users queried in any arbitrary order.
\end{proof}

The formula \eqref{eq:GS} will be largely side-stepped when we
consider asymptotically optimal strategies, but is needed to establish
that there is, in general, no stochastically dominant strategy if
$V>U$. With $\vW_k=(W_k^{(1)},\ldots,W_k^{(V)})$ being a random vector
taking values in $\A^{kV}$ with independent, not necessarily identically
distributed, components, we are not guaranteed the existence of an
$S$ such that $P(\GS(U,V,\vW_k)\leq n) \geq (\GSprime(U,V,\vW_k)\leq n)$
for all alternate strategies $S'$.

\begin{lemma}
If $V>U$, a stochastically dominant strategy does not necessarily exist.
\end{lemma}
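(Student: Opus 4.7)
The plan is to prove the statement by exhibiting a concrete small counterexample in the case $V = 2$, $U = 1$. In that setting the guesswork collapses, since for $U=1$ the quantity $\GS(1,2,\vW_k)$ equals the index of the first correct query; using independence of the two users,
\begin{align*}
\P\bigl(\GS(1,2,\vW_k) > n\bigr) = \P\bigl(W_k^{(1)}\notin A_1(n)\bigr)\,\P\bigl(W_k^{(2)}\notin A_2(n)\bigr),
\end{align*}
where $A_v(n) = \{w\in\A^k : S(v,w) \leq n\}$, so that $|A_v(n)| = \NS(v,n)$.

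For a fixed size split $(a,b)$ with $a + b = n$, each factor above is minimised by taking $A_v(n)$ to be the $a$, respectively $b$, most likely strings of user $v$, by the single-user argument used in Lemma~\ref{lem:1opt}. Letting $g_v(a)$ denote the resulting minimum, the smallest tail probability achievable by any strategy at horizon $n$ is therefore
\begin{align*}
\min_S \P\bigl(\GS(1,2,\vW_k) > n\bigr) = \min_{a+b=n,\ a,b\geq 0} g_1(a)\,g_2(b).
\end{align*}
A stochastically dominant strategy must, for every $n$, sit at a size pair $(\NS(1,n),\NS(2,n))$ that minimises this antidiagonal product while querying only top-likelihood strings of each user. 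Because each new query increases exactly one of $\NS(1,n),\NS(2,n)$ by one, the sequence of minimising pairs would have to form a monotone lattice path of unit steps, and the existence of a stochastically dominant strategy is equivalent to such a path existing.

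The remaining task is to exhibit distributions for which no such path can thread the minimisers. A clean instance is $k = 2$, $\A = \{0,1\}$, with user 1's four strings having masses $(0.4, 0.3, 0.2, 0.1)$ in decreasing order of likelihood and user 2's having masses $(0.7, 0.1, 0.1, 0.1)$. Direct computation of $g_1(a)\,g_2(b)$ along the antidiagonals $a+b = 1, 2, 3, 4$ yields unique minimisers $(0,1)$, $(1,1)$, $(2,1)$ at $n = 1, 2, 3$, while at $n = 4$ the only minimisers are $(0,4)$ and $(4,0)$. Since neither is a unit-step neighbour of $(2,1)$, no single strategy can be simultaneously optimal at $n = 3$ and $n = 4$, and hence no strategy stochastically dominates all others.

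The only real labour is this final numerical check; the conceptual core is the factorisation via independence, which reduces the question to a min over an antidiagonal, together with the observation that a dominating strategy would force a monotone unit-step lattice path through the resulting minimisers.
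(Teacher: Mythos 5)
Your proof is correct and takes essentially the same route as the paper's: a concrete $V=2$, $U=1$ counterexample in which the query allocation forced by optimality at one horizon is incompatible with optimality at a later one. The paper uses a smaller instance ($k=1$, $\A=\{0,1,2\}$, with the conflict already arising between $n=1$ and $n=2$), but your independence factorisation, the antidiagonal/lattice-path formalisation, and your numerical minimisers $(0,1),(1,1),(2,1)$ versus $\{(4,0),(0,4)\}$ all check out.
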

\begin{proof}
A counter-example suffices and so let $k=1$, $V=2$, $U=1$ and
$\A=\{0,1,2\}$. Let the distributions of $W_1^{(1)}$ and $W_1^{(2)}$
be
\begin{center}
\begin{tabular}{|c|c|}
\hline
User 1 & User 2\\
\hline
$P(W_1^{(1)}=0)=0.6$ & $P(W_1^{(2)}=0)=0.5$  \\
$P(W_1^{(1)}=1)=0.25$ & $P(W_1^{(2)}=1)=0.4$ \\
$P(W_1^{(1)}=2)=0.15$ & $P(W_1^{(2)}=2)=0.1$\\
\hline
\end{tabular}
\end{center}
If a stochastically dominant strategy exists, its first guess must
be user $1$, string $0$, i.e. $S(1,0)=1$, so that $P(\GS(1,\vW_1)=1) =
0.6$. Given this first guess, to maximize $P(\GS(1,\vW_1)\leq2)$, the
second guess must be user $1$, string $1$, $S(1,1)=2$, so that 
$P(\GS(1,\vW_1)\leq 2) = 0.85$.

An alternate strategy with $S(2,0)=1$ and $S(2,1)=2$, however,
gives $P(\GSprime(1,\vW_1)=1)=0.5$ and $P(\GSprime(1,\vW_1)\leq 2)=0.9$.
While $P(\GS(1,\vW_1)=1)>P(\GSprime(1,\vW_1)=1)$,
$P(\GS(1,\vW_1)\leq2)<P(\GSprime(1,\vW_1)\leq2)$ and so there is
no strategy stochastically dominated by all others in this case.
\end{proof}

Despite this lack of universal optimal strategy, we shall show that
there is a sequence of random variables that are stochastically
dominated by the guesswork of all strategies and, moreover, there
exists a strategy with identical performance in Arikan's long string
length asymptotic.

\begin{definition}
A strategy $S$ is asymptotically optimal if $\{k^{-1}\log
\GS(U,V,\vW_k)\}$ satisfies a LDP
with the same rate function as a sequence
$\{k^{-1}\log \Upsilon(U,V,\vW_k)\}$ where $\Upsilon(U,V,\vW_k)$
is stochastically dominated by $\GSprime(U,V,\vW_k)$ for all
strategies $S'$.
\end{definition}

Note that $\Upsilon(U,V,\cdot)$ need not correspond to the guesswork
of a strategy.

\section{An asymptotically optimal strategy}
\label{sec:asymptote}

Let $\{\vW_k\}$ be a sequence of random strings, with $\vW_k$ taking values in
$\A^{kV}$, with independent components, $W_k^{(v)}$, corresponding
to strings selected by users $1$ through $V$, although each user's
string 
may 
not be constructed from i.i.d. letters. 
For each individual
user, $v\in\{1,\ldots,V\}$, let $\Gv$ denote its single-user optimal
guessing strategy; that is, guessing from most likely to least likely.

We shall
show that the following random variable, constructed
using the $\Gv$, is stochastically
dominated by the guesswork distribution of all strategies:
\begin{align}
\Gopt(U,V,\vW_k) =
\Umin \left(G^{(1)}(W^{(1)}_k),\ldots,G^{(V)}(W^{(V)}_k)\right).
\label{eq:LB}
\end{align}
This can be thought of as allowing the inquisitor to query, for
each $n$ in turn, the $n^{\rm th}$ most likely string for all users
while only accounting for a single guess and so it does not
correspond to an allowable strategy.

\begin{lemma}
For any strategy $S$ and any $U\in\{1,\ldots,V\}$,
$\Gopt(U,V,\vW_k)$ is stochastically dominated by $\GS(U,V,\vW_k)$.
That is, for any any $U\in\{1,\ldots,V\}$ 
and any 
$n\in\{1,\ldots,m^k\}$
\begin{align*}
P(\Gopt(U,V,\vW_k)\leq n) \geq P(\GS(U,V,\vW_k)\leq n).
\end{align*}
\end{lemma}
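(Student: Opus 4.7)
The plan is to establish the set-theoretic inclusion $\{\GS(U,V,\vW_k) \leq n\} \subseteq \{\sum_{v=1}^V Z_v^S \geq U\}$ for suitably defined Bernoulli indicators $Z_v^S$, and then to dominate this latter event using single-user optimality (Lemma \ref{lem:1opt}) together with the independence of the users. For a fixed strategy $S$ and integer $n$, let $C_v^S(n) \subseteq \A^k$ denote the set of the first $n$ strings in the order in which $S$ queries user $v$; by construction $|C_v^S(n)| = n$ and the sets $C_v^S(n)$ do not depend on the random strings.

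First, I would show that on the event $\{\GS(U,V,\vw) \leq n\}$, at least $U$ of the realizations $w^{(v)}$ satisfy $w^{(v)} \in C_v^S(n)$. By \eqref{eq:GS}, $\GS = \sum_{v=1}^V \tau_v$ where $\tau_v := \NS(v, \min(S(v,w^{(v)}), \stop))$ is the number of queries $S$ actually directs at user $v$; non-negativity of the summands forces $\tau_v \leq \GS \leq n$ for every $v$. Because $S$ is one-to-one, the $V$ values $S(v,w^{(v)})$ are distinct, so by definition of $\stop$ exactly $U$ users satisfy $S(v,w^{(v)}) \leq \stop$; for each such $v$, the true string $w^{(v)}$ is one of the first $\tau_v$ user-$v$ strings queried by $S$, which is a subset of $C_v^S(n)$.

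The second step is a stochastic comparison. Set $Z_v^S := \mathbf{1}\{W_k^{(v)} \in C_v^S(n)\}$ and $Y_v := \mathbf{1}\{\Gv(W_k^{(v)}) \leq n\}$; by the independence of the $W_k^{(v)}$ and the determinism of the reference sets, both families $\{Z_v^S\}$ and $\{Y_v\}$ are independent. By the argument in Lemma \ref{lem:1opt}, the $n$ most likely user-$v$ strings maximize total probability mass over size-$n$ subsets of $\A^k$, so $P(Z_v^S = 1) \leq P(Y_v = 1)$ for every $v$. A standard monotone coupling of independent Bernoulli families then yields $\sum_v Z_v^S$ stochastically dominated by $\sum_v Y_v$, giving
\begin{align*}
P(\GS(U,V,\vW_k) \leq n)
 &\leq P\!\left(\sum_{v=1}^V Z_v^S \geq U\right) \\
 &\leq P\!\left(\sum_{v=1}^V Y_v \geq U\right) \\
 &= P(\Gopt(U,V,\vW_k) \leq n),
\end{align*}
the final equality because $\Umin$ of $V$ values is at most $n$ precisely when at least $U$ of them are.

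The main obstacle is the first step: one must unpack the compact formula \eqref{eq:GS}, which folds together the global stop at $\stop$ and the per-user skipping once a user's string is identified, and verify that neither mechanism prevents the clean bound $\tau_v \leq \GS$ for every $v$. Once that inclusion of events is in hand, the subsequent coupling and probability comparison are routine.
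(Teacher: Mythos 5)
Your proof is correct and takes essentially the same route as the paper's: since $W_k^{(v)} \in C_v^S(n)$ if and only if $\NS(v,S(v,W_k^{(v)}))\leq n$, your event $\{\sum_v Z_v^S \geq U\}$ is exactly the paper's intermediate event $\{\Umin(\NS(1,S(1,W_k^{(1)})),\ldots,\NS(V,S(V,W_k^{(V)})))\leq n\}$, which the paper reaches via the pointwise bound $\GS(U,V,\vw)\geq \Umin(\NS(1,S(1,w^{(1)})),\ldots,\NS(V,S(V,w^{(V)})))$ and then dominates componentwise using single-user optimality. Your Bernoulli-indicator coupling simply spells out the details that the paper leaves implicit (why the sum in \eqref{eq:GS} dominates the $U^{\rm th}$ order statistic, and why independence lets the componentwise comparison pass to the $\Umin$).
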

\begin{proof}
Using equation \eqref{eq:GS} and the positivity of its
summands, for any strategy $S$
\begin{align*}
&G_S(U, V, \vw)\\
&\ge\Umin (\NS(1, S(1, w^{(1)})),\ldots, \NS(V, S(V, w^{(V)}))).  
\end{align*}
As for each $v\in\{1,\ldots,V\}$, $G^{(v)}(W_k^{(v)})$ is stochastically
dominated by all other strategies,
\begin{align*}
P(G^{(v)}(W_k^{(v)})\le n)\ge P(\NS(v, S(1, W_k^{(v)}))\le n).
\end{align*}
Using equation \eqref{eq:LB}, this implies that
\begin{align*}
&P(\Gopt(\vW_k)\le n)\\
&\ge P(\Umin (\NS(1, S(1, W_k^{(1)})),\ldots, \NS(V, S(V, W_k^{(V)})))\le n)\\
&\ge P(G_S(U, V, \vW_k)\le n),
\end{align*}
as required.
\end{proof}

The strategy that we construct that will asymptotically meet the
performance of the lower bound is to round-robin the single user
optimal strategies. That is, to query the most likely string of one
user followed by the most likely string of a second user and so
forth, for each user in a round-robin fashion, before moving to the
second most likely string of each user. An upper bound on this
strategy's performance is to consider only stopping at the end of
a round of such queries, even if they reveal more than $U$ strings,
which gives
\begin{align}
V \Gopt(U,V,\vW_k),
\label{eq:UB}
\end{align}
where $\Gopt(U,V,\vW_k)$ is defined in \eqref{eq:LB}.

In large deviations parlance the stochastic processes
$\{k^{-1}\log\Gopt(U,V,\vW_k)\}$ and $\{k^{-1}\log(V\Gopt(U,V,\vW_k))\}$
arising from equations \eqref{eq:LB} and \eqref{eq:UB} are exponentially
equivalent, e.g. Section 4.2.2 \cite{Dembo98}, as $\lim_{k\to\infty}
k^{-1}\log V=0$. As a result, if one process satisfies the LDP
with a rate function that has compact level sets, then the other
does \cite{Dembo98}[Theorem 4.2.3]. Thus if
$\{k^{-1}\log\Gopt(U,V,\vW_k)\}$ can be shown to satisfy a LDP,
then the round-robin strategy is proved to be asymptotically optimal.

\section{Asymptotic performance of optimal strategies}
\label{sec:results}

We first recall what is known for the single-user setting.
For each individual user $v\in\{1,\ldots,V\}$, the
specific R\'enyi entropy of the sequence $\{\Wv_k\}$, 
should it exist, is defined by
\begin{align*}
\Rv(\beta):=
	\lim_{k\to\infty} \frac 1k \frac{1}{1-\beta} 
	\log \sum_{w_k\in\A^k} P(\Wv_k=w_k)^\beta
\end{align*}
for $\beta\in(0,1)\cup(1,\infty)$, and for $\beta=1$,
\begin{align*}
\Rv(1)&:=\lim_{\beta\uparrow1} \Rv(\beta)\\
&= -\lim_{k\to\infty} \frac1k \sum_{w_k\in\A^k} P(\Wv_k=w_k)\log P(\Wv_k=w_k),
\end{align*}
the specific Shannon entropy. Should $\Rv(\beta)$ exist
for $\beta\in(0,\infty)$, then the specific min-entropy
is defined 
\begin{align*}
\Rv(\infty)&=\lim_{\beta\to\infty}\Rv(\beta) \\
&= -\lim_{k\to\infty}\frac1k \max_{w_k\in\A^k} \log P(\Wv_k=w_k),
\end{align*}
where the limit necessarily exists.
The existence of $\Rv(\beta)$ for all $\beta>0$ and its relationship
to the scaled Cumulant Generating Function
(sCGF)
\begin{align}
\LambdaGv(\alpha)
&= \lim_{k\to\infty} \frac 1k \log E(\exp(\alpha\log \Gv(\Wv_k))) \nonumber\\
	&= 
	\begin{cases}
	\displaystyle
	\alpha \Rv\left(\frac{1}{1+\alpha}\right) & \text{ if } \alpha>-1\\
	-\Rv(\infty) & \text{ if } \alpha\leq-1
	\end{cases}
\label{eq:sCGF}
\end{align}
has been established for the single user case for a broad class of
character sources that encompasses i.i.d., Markovian and general
sofic shifts that admit an entropy condition
\cite{Arikan96,Malone042,Pfister04,Hanawal11,Christiansen13}.
If, in addition, $\Rv(\beta)$ is differentiable 
with respect to $\beta$
and has a continuous
derivative, it is established in \cite{Christiansen13} that the
process $\{k^{-1}\log \Gv(\Wv_k)\}$ satisfies a LDP, 
i.e. equation \eqref{eq:LDP},
with a convex rate function
\begin{align}
\label{eq:rf}
\LambdaGv^*(x) = \sup_{\alpha\in\R}\left(x\alpha-\LambdaGv(\alpha)\right).
\end{align}
In \cite{Christiansen13}, this LDP is used to deduce an 
approximation to the guesswork distribution,
\begin{align}
\label{eq:approx1}
P(\Gv(\Wv_k)=n) \approx 
\frac 1n \exp\left(-k\LambdaGv^*\left(\frac 1k\log n\right)\right)
\end{align}
for large $k$ and $n\in\{1,\ldots,m^k\}$. 

The following theorem establishes the fundamental analogues of these
results for an asymptotically optimal strategy, where
user strings may have distinct statistical properties.

\begin{theorem}
\label{thm:main}
Assume that the components of $\{\vW_k\}$ are independent and
that for each $v\in\{1,\ldots,V\}$ $\Rv(\beta)$ exists for
all $\beta>0$, is differentiable and has a continuous derivative,
and that equation \eqref{eq:sCGF} holds. Then the process
$\{k^{-1}\log\Gopt(U,V,\vW_k)\}$, and thus any asymptotically optimal
strategy, satisfies a Large Deviation Principle.
Defining 
\begin{align*}
&\deltav(x)=\begin{cases}
	\LambdaGv^*(x) & \text{ if }x\le \Rv(1)\\
	0 & \text{ otherwise}
	\end{cases}\\
&\text{  and } \gammav(x)=\begin{cases}
	\LambdaGv^*(x) &\text{ if }x\ge \Rv(1)\\
	0 & \text{ otherwise}
\end{cases},
\end{align*}
the rate function is
\begin{align}
&\IGopt(U,V,x)=\nonumber\\
&\min_{v_1,\ldots, v_V}
\left(\LambdaGvone^*(x)+\sum_{i=2}^U \delta^{(v_i)}(x)
+\sum_{i=U+1}^V \gammavi(x)\right),
\label{eq:Iopt}
\end{align}
which is lower semi-continuous and has compact level sets, but
may not be convex. The sCGF capturing how the moments scale is 
\begin{align}
\LambdaGopt(U,V,\alpha)
&= \lim_{k\to\infty} \frac 1k \log E(\exp(\alpha\log \Gopt(U,V,\vW_k)))\nonumber\\
	&= \sup_{x\in[0,\log(m)]}\left(\alpha x -\IGopt(U,V,x)\right).  
\label{eq:scgf}
\end{align}
\end{theorem}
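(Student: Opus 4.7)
The plan is to derive Theorem \ref{thm:main} from the single-user LDPs by first combining them via independence into a joint LDP, and then pushing the result forward under the continuous map $\Umin$ using the contraction principle. Under the stated assumptions on $\Rv(\beta)$, the results of \cite{Christiansen13} grant, for each $v\in\{1,\ldots,V\}$, a LDP for $\{k^{-1}\log \Gv(\Wv_k)\}$ on $[0,\log m]$ with good, convex rate function $\LambdaGv^*$. Independence of the components of $\vW_k$ then produces a joint LDP on $[0,\log m]^V$ for the vector $(k^{-1}\log \Gv(\Wv_k))_{v=1}^V$ with additive good rate function $J(x_1,\ldots,x_V)=\sum_{v=1}^V \LambdaGv^*(x_v)$; this is the standard product LDP for independent sequences of good rate functions (e.g. Exercise 4.2.7 of \cite{Dembo98}).

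Since $\Umin:\R^V\to\R$, the $U$-th order statistic, is continuous, the contraction principle (Theorem 4.2.1 of \cite{Dembo98}) transfers the joint LDP to the desired LDP for $\{k^{-1}\log\Gopt(U,V,\vW_k)\}$, with rate function
\begin{align*}
\IGopt(U,V,x)=\inf\left\{\sum_{v=1}^V\LambdaGv^*(x_v):\Umin(x_1,\ldots,x_V)=x\right\}.
\end{align*}
To bring this into the closed form \eqref{eq:Iopt}, I would stratify the infimum by labellings $(v_1,\ldots,v_V)$ specifying which user realises the $U$-th smallest value at $x$, which $U-1$ fall at or below $x$, and which $V-U$ lie at or above. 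Separability of $J$ reduces each stratum to coordinate-wise minimisations: for a user constrained to $y\le x$, convexity of $\LambdaGv^*$ with zero minimum at $\Rv(1)$ gives $\inf_{y\le x}\LambdaGv^*(y)=\deltav(x)$, and symmetrically $\inf_{y\ge x}\LambdaGv^*(y)=\gammav(x)$. Taking the outer extremum over labellings recovers the stated expression. Compact level sets and lower semi-continuity of $\IGopt(U,V,\cdot)$ on $[0,\log m]$ are inherited automatically from the contraction of a good rate function through a continuous map, while non-convexity arises because this extremum mixes convex branches whose minimisers are placed at different users' typical rates $\Rv(1)$.

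Finally, the sCGF identity \eqref{eq:scgf} follows from Varadhan's lemma applied to the established LDP: the random variables $k^{-1}\log\Gopt(U,V,\vW_k)$ are confined to the compact interval $[0,\log m]$ and the test function $x\mapsto\alpha x$ is continuous and bounded there, so tail moment hypotheses are trivial and $\lim_k k^{-1}\log E(\exp(\alpha\log\Gopt(U,V,\vW_k)))=\sup_{x\in[0,\log m]}(\alpha x-\IGopt(U,V,x))$. The only genuinely delicate step I anticipate is the combinatorial reduction identifying the contraction infimum with \eqref{eq:Iopt}: it requires a careful case split on the relative positions of $x$ and each $\Rv(1)$, together with a verification that the piecewise definitions of $\deltav$ and $\gammav$ correctly encode the one-sided infima of $\LambdaGv^*$. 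No further probabilistic input beyond the single-user LDP and the exponential equivalence observed after \eqref{eq:UB} is needed, since the latter transfers the LDP from $\Gopt$ to any asymptotically optimal strategy.
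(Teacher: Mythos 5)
Your proposal follows essentially the same route as the paper's proof: the single-user LDPs from \cite{Christiansen13}, the product LDP for the independent components, the contraction principle through the continuous map $\Umin$, and Varadhan's lemma for the sCGF, and you additionally spell out the ``elementary arguments'' the paper omits when reducing the contraction infimum to closed form via the one-sided infima of the convex $\LambdaGv^*$. One remark: your stratification of the infimum correctly yields a \emph{minimum} over the labellings $(v_1,\ldots,v_V)$ (consistent with the contraction principle and with Figure \ref{fig:nonconvex}, where the rate function is the pointwise minimum of the two single-user rate functions), whereas \eqref{eq:Iopt} as printed reads $\max$.
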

\begin{proof}
Under the assumptions of the theorem, 
for each $v\in\{1,\ldots,V\}$, $\{k^{-1}
\log \Gv (\Wv_k) \}$ satisfies the LDP with the rate function
given in equation \eqref{eq:rf}. As users' strings are selected
independently, the sequence of vectors
\begin{align*}
\left\{\left(\frac 1k \log G^{(1)}(W_k^{(1)}), \ldots, 
	\frac 1k \log G^{(V)}(W_k^{(V)}) \right)\right\}
\end{align*}
satisfies the LDP in $\R^V$ with rate function $I(y^{(1)},\ldots,y^{(V)})
= \sum_{v=1}^V \LambdaGv^*(y^{(v)})$, the sum of the rate functions
given in equation \eqref{eq:rf}.

Within our setting, the contraction principle, e.g. Theorem 4.2.1
\cite{Dembo98}, states that if a sequence of random variables
$\{X_n\}$ taking values in a compact subset of $\R^V$ satisfies a
LDP with rate function $I:\R^V\mapsto[0,\infty]$ and $f:\R^V\mapsto\R$
is a continuous function, then the sequence $\{f(X_n)\}$ satisfies
the LDP with rate function $\inf_{\vy}\{I(\vy):f(\vy)=x\}$.

Assume, without loss of generality, that $\vx\in\R^V$ is such that
$x^{(1)}<x^{(2)}<\cdots<x^{(V)}$, so that $\Umin(\vx) = x^{(U)}$,
and let
$\vx_n=(x^{(1)}_n,\ldots,x^{(V)}_n)\to\vx$. Let
$\epsilon < \inf\{x^{(v)}-x^{(v-1)}:v\in\{2,\ldots,V\}\}$. There
exists $N_\epsilon$ such that $\max_{v=1,\ldots,V}
|x^{(v)}_n-x^{(v)}|<\epsilon$ for all $n>N_\epsilon$. Thus
for all $v\in\{2,\ldots,V\}$ and all $n>N_\epsilon$ 
$x^{(v)}_n-x^{(v-1)}_n > x^{(v)}-x^{(v-1)}-\epsilon>0$
and so
$|\Umin(\vx_n)-\Umin(\vx)|=|x_n^{(U)}-x^{(U)}|<\epsilon$.
Hence $\Umin:\R^V\to\R$ is a continuous function
and that a LDP 
holds follows from an application of the contraction principle,
giving the rate function
\begin{align*}
\IGopt(U,V,x) = \inf\left\{
                \sum_{v=1}^V \LambdaGv^*(y_v):\Umin(y_1,\ldots,y_V)=x
                \right\}.
\end{align*}
This expression simplifies to that in equation \eqref{eq:Iopt} by
elementary arguments.
The sCGF result follows from an
application of Varahadan's Lemma, e.g \cite[Theorem 4.3.1]{Dembo98}.
\end{proof}

The expression for the rate function in equation \eqref{eq:Iopt}
lends itself to a useful interpretation. In the long string-length
asymptotic, the likelihood that an inquisitor has identified $U$
of the $V$ users' strings after approximately $\exp(kx)$ queries
is contributed to by three distinct groups of identifiable users.
For given $x$, the argument in the first term $(v_1)$ identifies
the last of the $U$ users whose string is identified. The second
summed term is contributed to by the collection of users, $(v_2)$
to $(v_U)$, whose strings have already been identified prior to
$\exp(kx)$ queries, while the final summed term corresponds to those
users, $(v_{U+1})$ to $(v_V)$, whose strings have not been identified.

The reason for using the notation $\IGopt(U,V,\cdot)$ in lieu of
$\LambdaGopt^*(U,V,\cdot)$ for the rate function in Theorem
\ref{thm:main} is that $\IGopt(U,V,\cdot)$ is not convex in general,
which we shall
demonstrate by example, and so is not always the
Legendre-Fenchel transform of the sCGF $\LambdaGopt(U,V,\cdot)$.
Instead
\begin{align*}
\LambdaGopt^*(U,V,x) = 
	\sup_\alpha\left(\alpha x -\LambdaGopt(U,V,\alpha)\right)
\end{align*}
forms the convex hull of $\IGopt(U,V,\cdot)$. In particular, this
means that we could not have proved Theorem \ref{thm:main} by
establishing properties of $\LambdaGopt(U,V,\cdot)$ alone, which
was the successful route taken for the $U=V=1$ setting, and instead
needed to rely on the LDP proved in \cite{Christiansen13}. 
Indeed, in the setting considered in \cite{Merhav99, Hanawal11a}
with $U=1$, $V=2$, with one of the strings chosen uniformly, while
the authors directly identify $\LambdaGopt(1,2,\alpha)$ for $\alpha>0$,
one cannot establish a full LDP from this approach as the resulting
rate function is not convex.  

Convexity 
of the rate function defined in equation \eqref{eq:Iopt}
is ensured, however, if all
users select 
strings
using the same stochastic properties, whereupon the
results in Theorem \ref{thm:main} simplify greatly.

\begin{corollary}
\label{cor:same}
If, in addition to the assumptions of Theorem \ref{thm:main},
$\LambdaGv(\cdot)=\LambdaG(\cdot)$ for all $v\in\{1,\ldots,V\}$ with
corresponding R\'enyi entropy $R(\cdot)$, then the rate function
in equation \eqref{eq:rf} simplifies to the convex function
\begin{align}
\LambdaGopt^*(U,V,x) &= 
	\begin{cases}
	\displaystyle
	U\LambdaG^*(x) & \text{ if } x\leq R(1)\\
	\displaystyle
	(V-U+1)\LambdaG^*(x) & \text{ if } x\geq R(1)
	\end{cases}
\label{eq:rf2}
\end{align}
where $R(1)$ is the specific Shannon entropy, and the sCGF in equation
\eqref{eq:scgf} simplifies to
\begin{align}
\label{eq:scgf2}
\LambdaGopt(U,V,\alpha) &=
	\begin{cases}
	\displaystyle
	U\LambdaG\left(\frac{\alpha}{U}\right) & \text{ if } \alpha\leq0\\
	\displaystyle
	(V-U+1)\LambdaG\left(\frac{\alpha}{V-U+1}\right) & \text{ if } \alpha\geq0.
	\end{cases}
\end{align}
In particular, with $\alpha=1$ we have
\begin{align}
&\lim_{k\to\infty} \frac 1k \log E\left(\Gopt(U,V,\vW_k)\right)\nonumber\\
	&= \LambdaGopt(1)\nonumber\\
	&= (V-U+1)\LambdaG\left(\frac{1}{V-U+1}\right)\nonumber\\
	&= R\left(\frac{V-U+1}{V-U+2}\right),
	\label{eq:average}
\end{align}
where $R((n+1)/(n+2))-R((n+2)/(n+3))$ is a decreasing
function of $n\in\N$.
\end{corollary}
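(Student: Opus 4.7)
The plan is to specialize Theorem \ref{thm:main}'s formulas to the homogeneous setting, obtain the sCGF by Legendre duality, evaluate at $\alpha=1$ using \eqref{eq:sCGF}, and finish with a perspective-function argument for the monotonicity claim.

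First, I would substitute $\Lambda_G^{(v)} = \Lambda_G$ (and hence $\delta^{(v)} = \delta$, $\gamma^{(v)} = \gamma$) into equation \eqref{eq:Iopt}. The argument inside the maximization becomes independent of the permutation $(v_1,\ldots,v_V)$ and collapses to $\Lambda_G^*(x) + (U-1)\delta(x) + (V-U)\gamma(x)$. Splitting into cases around the unique zero $x = R(1)$ of $\Lambda_G^*$: on $\{x \leq R(1)\}$ one has $\gamma = 0$ and $\delta = \Lambda_G^*$, giving $U\Lambda_G^*(x)$; on $\{x \geq R(1)\}$ the roles swap and one obtains $(V-U+1)\Lambda_G^*(x)$. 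Both pieces vanish at $x = R(1)$, so \eqref{eq:rf2} is continuous, and convexity on $[0,\log m]$ follows from the convexity and one-sided monotonicity of $\Lambda_G^*$ together with the vanishing of its derivative at the interior minimum.

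Having established convexity of $\Lambda_{G_\text{opt}}^*(U,V,\cdot)$, I would compute the sCGF as its Legendre transform. For $\alpha \geq 0$, the scaling identity $\sup_x(\alpha x - c\Lambda_G^*(x)) = c\Lambda_G(\alpha/c)$ with $c = V-U+1$ handles the branch $x \geq R(1)$; the maximizer lies automatically in this region since $(\Lambda_G^*)'$ is non-negative there. On $\{x \leq R(1)\}$ the function $\alpha x - U\Lambda_G^*(x)$ is non-decreasing, so its sup equals $\alpha R(1)$, and the tangent-line inequality $\Lambda_G(y) \geq R(1) y$ (which holds because $\Lambda_G$ is convex with $\Lambda_G(0) = 0$ and $\Lambda_G'(0) = R(1)$) shows this branch is dominated. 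The case $\alpha \leq 0$ is symmetric with $c = U$, yielding \eqref{eq:scgf2}. Setting $\alpha = 1$ and substituting $\Lambda_G(\beta) = \beta R(1/(1+\beta))$ at $\beta = 1/(V-U+1)$ from \eqref{eq:sCGF} then produces \eqref{eq:average}.

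Finally, for the monotonicity claim, I would set $f(t) := t \Lambda_G(1/t)$ for $t > 0$, so that $R((n+1)/(n+2)) = f(n+1)$ and the statement reduces to the discrete convexity inequality $f(n+1) + f(n+3) \geq 2 f(n+2)$. This follows from the perspective-function fact: for any convex $g$, the map $(x,t) \mapsto t g(x/t)$ is jointly convex on $\{t > 0\}$, so restricting to $x = 1$ gives convexity of $t \mapsto t\Lambda_G(1/t)$, from which the desired inequality is immediate. I expect the main care to be needed in the piecewise Legendre step, where one must verify that each branch attains its sup in the correct region and that the dominated branch is genuinely dominated; both are corollaries of the tangent-line inequality for $\Lambda_G$.
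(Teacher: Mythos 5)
Your proposal is correct and follows essentially the same route as the paper: the case-split of \eqref{eq:Iopt} around $x=R(1)$ to get \eqref{eq:rf2}, and convexity of $t\mapsto t\LambdaG(1/t)$ (the paper proves exactly your perspective-function fact by an explicit convex-combination computation with $\eta = a x_1/(ax_1+(1-a)x_2)$) for the monotonicity of $R((n+1)/(n+2))-R((n+2)/(n+3))$. The only difference is that you carefully verify the piecewise Legendre transform yielding \eqref{eq:scgf2}, including the tangent-line domination of the $x\le R(1)$ branch, a step the paper leaves implicit; that added care is sound and welcome but does not change the argument.
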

\begin{proof}
The simplification in equation \eqref{eq:rf2} follows readily from
equation \eqref{eq:Iopt}. To establish that
$R((n+1)/(n+2))-R((n+2)/(n+3))$ is a decreasing function of $n\in\N$,
it suffices to establish that $R((x+1)/(x+2))$ is a convex, decreasing
function for $x\in\R_+$.

That $R(x)\downarrow R(1)$ as $x\uparrow1$ is a general property
of specific R\'enyi entropy. For convexity, using equation
\eqref{eq:average} it suffices to
show that $x\LambdaG(1/x)$ is convex for $x>0$. This can be seen
by noting that for any $a\in(0,1)$ and $x_1,x_2>0$,
\begin{align*}
&(a x_1+(1-a)x_2)
\LambdaG\left(\frac{1}{a x_1+(1-a)x_2}\right)\\
&= (a x_1+(1-a)x_2) 
	\LambdaG\left(\eta \frac{1}{x_1}+(1-\eta)\frac{1}{x_2}\right)\\
&\leq a x_1 \LambdaG\left(\frac{1}{x_1}\right)
	+(1-a)x_2 \LambdaG\left(\frac{1}{x_2}\right),
\end{align*}
where $\eta = a x_1/(a x_1+(1-a)x_2)\in(0,1)$ and we have used the
convexity of $\LambdaG$. 
\end{proof}

As the growth rate, $R((n+1)/(n+2))-R((n+2)/(n+3))$, is decreasing
there is a law of diminishing returns for the inquisitor where the
greatest decrease in the average guesswork growth rate is through
the provision of one additional user. 
From the system designer's point of view, the specific Shannon
entropy of the source is a universal lower bound on the exponential
growth rate of the expected guesswork that, while we cannot take
the limit to infinity, is tight for large $V-U$.

Regardless 
of whether the rate function $\IGopt(U,V,\cdot)$ is
convex, Theorem \ref{lem:approx}, which follows, justifies the approximation
\begin{align*}
P(\Gopt(U,V,\vW_k)=n) \approx 
	\frac 1n \exp\left(-k\IGopt\left(U,V,\frac 1k\log n\right)\right)
\end{align*}
for large $k$ and $n\in\{1,\ldots,m^k\}$. It is analogous to that
in equation \eqref{eq:approx1}, first developed in \cite{Christiansen13},
but there are additional difficulties that must be overcome to
establish it. In particular, if $U=V=1$, the likelihood that the
string is identified at each query is a decreasing function of guess
number, but this is not true in the more general case.

As a simple example, consider $U=V=2$, $\A=\{0,1\}$, strings of
length $1$ and strings chosen uniformly. Here the probability of
guessing both strings in one guess is $1/4$, but at the second guess
it is $3/4$. Despite this lack of monotonicity, the approximation
still holds in the following sense.

\begin{theorem}
\label{lem:approx}
Under the assumptions of Theorem \ref{thm:main}, for any $x \in [0,
\log m)$ we have
\begin{align*}
&\lim_{\epsilon \downarrow 0}\liminf_{k\rightarrow \infty}
	\frac 1k \log \inf_{n \in K_k(x, \epsilon)}P(\Gopt(U,V,\vW_k)=n)\\
&=\lim_{\epsilon \downarrow 0}\limsup_{k\rightarrow \infty}
	\frac 1k \log \sup_{n \in K_k(x, \epsilon)}P(\Gopt(U,V,\vW_k)=n)\\
&=-I_{\Gopt}(U, V, x)-x,
\end{align*}
where
\begin{align*}
K_k(x, \epsilon)=\{n:n \in (\exp(k(x-\epsilon)), \exp((k(x+\epsilon)))\}
\end{align*}
is the collection of guesses made in a log-neighborhood of $x$.
\end{theorem}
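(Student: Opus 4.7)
The plan is to compute $P(\Gopt(U,V,\vW_k) = n)$ by decomposing each realization according to the ordering of individual guesses, which sidesteps the monotonicity argument of \cite{Christiansen13} that, as the authors emphasize, fails outside the $U=V=1$ setting. Writing $H_v = \Gv(\Wv_k)$ for the single-user guesswork, $\{\Gopt(U,V,\vW_k) = n\}$ is the event that the $U$-th order statistic $H_{(U)}$ equals $n$. I would partition this event over triples $(A,B,C)$ of disjoint subsets of $\{1,\ldots,V\}$ with $A = \{v:H_v\leq n-1\}$, $B=\{v:H_v=n\}$ and $C=\{v:H_v\geq n+1\}$, in which $H_{(U)}=n$ imposes $|A|\leq U-1$ and $|A|+|B|\geq U$. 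Independence across users then gives
\begin{align*}
P(\Gopt = n) = \sum_{(A,B,C)} \prod_{v\in A} P(H_v\leq n-1) \prod_{v\in B} P(H_v=n) \prod_{v\in C} P(H_v\geq n+1).
\end{align*}

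Three uniform single-user asymptotics on $n\in K_k(x,\epsilon)$, which I would invoke from earlier results, feed in here. The per-user LDP implied by the hypotheses yields
\begin{align*}
\frac{1}{k}\log P(H_v\leq n-1)\to -\deltav(x),\qquad \frac{1}{k}\log P(H_v\geq n+1)\to -\gammav(x),
\end{align*}
matching the definitions of $\deltav$ and $\gammav$ in Theorem \ref{thm:main}, while the pointwise estimate \eqref{eq:approx1} from \cite{Christiansen13} supplies
\begin{align*}
\frac{1}{k}\log P(H_v=n)\to -(x+\LambdaGv^*(x)).
\end{align*}
Substituting, each summand above is of exponential order $\exp(-k[\kappa(A,B,C;x)+o(1)])$ with
\begin{align*}
\kappa(A,B,C;x) = \sum_{v\in A}\deltav(x) + \sum_{v\in B}(x+\LambdaGv^*(x)) + \sum_{v\in C}\gammav(x).
\end{align*}

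For the upper bound I would note that the number of admissible partitions is at most $3^V$, constant in $k$, which delivers $k^{-1}\log P(\Gopt=n)\leq -\min_{(A,B,C)}\kappa(A,B,C;x)+o(1)$. The matching lower bound comes from retaining one optimal configuration. The minimization of $\kappa$ is elementary: since $\LambdaGv^*(x)=\deltav(x)+\gammav(x)\geq\max(\deltav(x),\gammav(x))$, enlarging $B$ is never strictly advantageous, so the minimizer has $|A|=U-1$, $|B|=1$ and $|C|=V-U$; reparametrizing the remaining choice by an ordering $(v_1,\ldots,v_V)$ of users with $v_1\in B$, $v_2,\ldots,v_U\in A$ and $v_{U+1},\ldots,v_V\in C$ reproduces the optimization that defines $\IGopt(U,V,x)$ in \eqref{eq:Iopt}, giving $\min_{(A,B,C)}\kappa(A,B,C;x) = x+\IGopt(U,V,x)$ in the notation of Theorem \ref{thm:main}.

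The principal obstacle is producing the three single-user asymptotics uniformly for $n\in K_k(x,\epsilon)$ with errors that vanish in the iterated limit. For the tail probabilities this is a routine consequence of the LDP upper and lower bounds applied to $[0,x+\epsilon]$ and $[x-\epsilon,\log m]$ together with continuity of $\deltav$ and $\gammav$, which follows from the regularity assumed on $\Rv$. For the point probability the uniform form of \eqref{eq:approx1} established in \cite{Christiansen13} applies under the same differentiability hypotheses imposed here, so that the $o(1)$ error is genuinely uniform. Continuity of $\IGopt(U,V,\cdot)$ in $x$, required when finally sending $\epsilon\downarrow 0$, is immediate since $\IGopt$ is an extremum over finitely many continuous combinations of continuous functions; a brief verification that $x<\log m$ keeps every $\LambdaGv^*(x)$ finite closes the argument.
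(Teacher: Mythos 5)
Your proof follows the same essential route as the paper's: reduce $P(\Gopt(U,V,\vW_k)=n)$ to products of one single-user point probability and single-user tail probabilities, control the tails with the individual LDPs, handle the point mass through the monotonicity of the single-user optimal pmf, and then optimize over which user plays which role. The two places where you deviate are both improvements in bookkeeping rather than in substance. First, your partition over triples $(A,B,C)$ is an exact disjoint decomposition, whereas the paper sums over permutations $(v_1,\ldots,v_V)$ (an identity that over-counts on ties) and then applies a $V!$ union bound; your reduction of the minimizer to $|A|=U-1$, $|B|=1$, $|C|=V-U$ via $x+\LambdaGv^*(x)\ge\max(\deltav(x),\gammav(x))$ is correct and recovers the contraction-principle form of the rate function. (Note that \eqref{eq:Iopt} as printed shows a $\max$ over orderings; for consistency with the contraction-principle derivation in Theorem~\ref{thm:main}, with Corollary~1 and with Figure~\ref{fig:nonconvex} it must be read as a minimum, which is exactly what your optimization produces.) Second, where the paper re-derives the single-user point-mass asymptotic inline --- ``isolating the last word \ldots\ and leveraging the monotonicity'' --- you import it as the uniform, iterated-limit form of \eqref{eq:approx1} from \cite{Christiansen13}; this is legitimate, since the hypotheses here are precisely those under which that single-user corollary holds, and the monotonicity argument simply lives inside the cited result rather than on the page.

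One sentence needs repair: the claim that $x<\log m$ keeps every $\LambdaGv^*(x)$ finite is false in general. In the example of Section~\ref{sec:mismatch}, the uniform-bit source embedded in a byte alphabet has $\LambdaGv^*(x)=+\infty$ for all $x>\log 2$ even though $\log m=\log 8$. What your final $\epsilon\downarrow 0$ step actually requires is continuity at $x$ of the finitely many functions $\LambdaGv^*$, $\deltav$, $\gammav$ entering the optimal configuration, which holds on the interior of the effective domain under the assumed regularity of $\Rv$; at boundary points of the effective domain (where $\IGopt(U,V,\cdot)$ jumps to $+\infty$) the claim needs a separate, though easy, treatment --- a caveat that applies equally to the paper's own proof. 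This does not affect the core of your argument.
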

\begin{proof}
The proof follows the ideas in \cite{Christiansen13} Corollary 4,
but with the added difficulties resolved by isolating the last word
that is likely to be guessed and leveraging the monotonicity it its
individual likelihood of being identified.

Noting the definition of $K_k(x, \epsilon)$ in the statement
of the theorem, consider for $x\in(0,\log(m))$
\begin{align*}
&\sup_{n \in K_k(x, \epsilon)}P(\Gopt(U,V,\vW_k)=n)\nonumber\\
=& \sup_{n \in K_k(x, \epsilon)}\sum_{(v_1, \ldots, v_V)}
	P(\Gvone(W_k^{(v_1)})=n)\\
	&\prod_{i=2}^{U}P(\Gvi(W_k^{(v_i)})\le n)
	\prod_{i=U+1}^{V}P(\Gvi(W_k^{(v_i)})\ge n)\\
\le& \sup_{n \in K_k(x, \epsilon)}
	\max_{(v_1, \ldots, v_V)}(V!)P(\Gvone(W_k^{(v_1)})=n)\\
	&\prod_{i=2}^{U}P(\Gvi(W_k^{(v_i)})\le n)
		\prod_{i=U+1}^{V}P(\Gvi(W_k^{(v_i)})\ge n)\\
\le& \sup_{n \in K_k(x, \epsilon)}
	\max_{(v_1, \ldots, v_V)}(V!)P(\Gvone(W_k^{(v_1)})=n)\\
	&\prod_{i=2}^{U}
	P\left(\frac 1k \log \Gvi(W_k^{(v_i)})\le x-\epsilon\right)\\
&\prod_{i=
U+1
}^{V}P\left(\frac 1k \log \Gvi(W_k^{(v_i)})\ge x+\epsilon\right)\nonumber\\
&\le \inf_{n \in K_k(x-2\epsilon, \epsilon)}\max_{(v_1, \ldots, v_V)}(V!)
P\left(\frac1k \log \Gvone(W_k^{(v_1)})=n\right)\\
&\prod_{i=2}^{U}
P\left(\frac 1k \log \Gvi(W_k^{(v_i)})\le x+\epsilon\right)\\
&\prod_{i=U+1}^{V}P\left(\frac 1k \log\Gvi(W_k^{(v_i)})\ge x-\epsilon\right).\nonumber
\end{align*}
The first equality holds by definition of $\Gopt(U,V,\cdot)$.  The
first inequality follows from the union bound over all possible
permutations of $\{1, \ldots, V\}$. The second inequality utilizes
$k^{-1}\log n \in (x-\epsilon, x+\epsilon)$ if $n\in K_k(x, \epsilon)$,
while the third inequality uses the monotonic decreasing probabilities
in guessing a single user's string.

Taking $\lim_{\epsilon \downarrow 0}\limsup_{k\rightarrow
\infty}k^{-1}\log$ on both sides of the inequality, interchanging
the order of the max and the supremum, using the continuity
of $\LambdaGv(\cdot)$ for each $v\in\{1,\cdots,V\}$, and the
representation of the rate function $\IGopt(U,V,\cdot)$ in equation
\eqref{eq:Iopt}, gives the upper bound
\begin{align*}
&\lim_{\epsilon \downarrow 0}\limsup_{k\rightarrow \infty}\frac 1k \log \sup_{n \in K_k(x, \epsilon)}
	P(\Gopt(\vW_k)=n) \\
&\leq-\IGopt(U, V, x)-x.
\end{align*}
Considering the least likely guesswork in the ball leads to a
matching lower bound. The other case, $x=0$, follows similar logic,
leading to the result.
\end{proof}

We next provide some illustrative examples of what these results
imply, returning to using $\log_2$ in figures.

\section{Mismatched Statistics Example}
\label{sec:mismatch}

The potential lack of convexity in the rate function of Theorem
\ref{thm:main}, equation \eqref{eq:Iopt}, only arises if users'
string statistics are asymptotically distinct. The significance of
this lack of convexity on the phenomenology of guesswork can be
understood in terms of the asymptotically optimal round-robin
strategy: if the rate function is not convex, there is no single
set of users whose strings are most vulnerable.
That is, if $U$ strings are recovered after a small number of
guesses, they will be from one set of users, but after a number of
guesses corresponding to a transition from the initial convexity
they will be from another set of users. 
This is made explicit in the following corollary to Theorem
\ref{thm:main}.

\begin{corollary}
If $\IGopt(U,V,x)$ is not convex in $x$, then there is there is no
single set of users whose strings will be identified in the long
string length asymptotic.
\end{corollary}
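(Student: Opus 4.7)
The plan is to prove the contrapositive: if there is a single subset $A\subset\{1,\ldots,V\}$ with $|A|=U$ such that, with probability tending to one in $k$, $A$ is the set of users whose strings are identified by the time $U$ of the $V$ strings have been recovered, then $\IGopt(U,V,\cdot)$ must itself be convex.

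Following the interpretation given immediately after Theorem \ref{thm:main}, the users whose strings are identified at guesswork level $\exp(kx)$ are precisely the first $U$ entries $\{v_1,\ldots,v_U\}$ of a permutation realising the extremum in equation \eqref{eq:Iopt}. The hypothesis that this set is always $A$ therefore translates into the statement that, for every $x$ in the interior of the effective domain, the extremum is attained by some permutation with $\{v_1,\ldots,v_U\}=A$. Restricting the extremisation accordingly, and using that $\sum_{i=2}^U\delta^{(v_i)}(x)$ depends only on the unordered set $A\setminus\{v_1\}$ while $\sum_{i=U+1}^V\gamma^{(v_i)}(x)$ depends only on $\{1,\ldots,V\}\setminus A$, the rate function reduces to
\begin{align*}
\IGopt(U,V,x)=\max_{v_1\in A}\left[\LambdaGvone^*(x)+\sum_{i\in A\setminus\{v_1\}}\delta^{(i)}(x)\right]+\sum_{i\notin A}\gamma^{(i)}(x).
\end{align*}
The pointwise identity $\LambdaGv^*(x)=\deltav(x)+\gammav(x)$, immediate from the two-piece definitions of $\deltav$ and $\gammav$ together with $\LambdaGv^*(\Rv(1))=0$, then allows splitting $\LambdaGvone^*$ and absorbing its $\delta^{(v_1)}$ contribution into the sum over $A$, yielding
\begin{align*}
\IGopt(U,V,x)=\sum_{i\in A}\delta^{(i)}(x)+\max_{v_1\in A}\gamma^{(v_1)}(x)+\sum_{i\notin A}\gamma^{(i)}(x).
\end{align*}

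Each $\deltav$ coincides with the non-increasing convex tail of $\LambdaGv^*$ on $(-\infty,\Rv(1)]$ and is identically zero on $[\Rv(1),\infty)$, hence convex on $\R$; symmetrically, each $\gammav$ is zero on $(-\infty,\Rv(1)]$ and equals the non-decreasing convex tail of $\LambdaGv^*$ on $[\Rv(1),\infty)$, hence also convex. Since convexity is preserved under finite sums and under pointwise maxima, the right-hand side above is convex in $x$, contradicting the hypothesis that $\IGopt(U,V,\cdot)$ is not convex, and completing the contrapositive.

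The principal obstacle I anticipate is the first step: rigorously tying the intuitive phrase ``a single set of users is identified'' to the precise permutation-level condition $\{v_1,\ldots,v_U\}=A$ for every $x$. A natural route is a conditional large deviations argument in the spirit of Theorem \ref{lem:approx}, showing that conditional on $k^{-1}\log\Gopt(U,V,\vW_k)\approx x$ the exponential rate at which the identified set equals a particular $U$-subset $A'$ is controlled by the corresponding entry in equation \eqref{eq:Iopt}, so non-convexity of $\IGopt(U,V,\cdot)$ forces the optimising $A'$ to depend on $x$.
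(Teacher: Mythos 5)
Your proposal is correct and takes essentially the same route as the paper: both argue by contraposition, fixing the identified set of users and observing that the resulting expression for $\IGopt(U,V,\cdot)$ is built from the convex functions $\LambdaGv^*$, $\deltav$ and $\gammav$ by sums and pointwise maxima, hence convex. Your additional step of writing $\LambdaGvone^*=\delta^{(v_1)}+\gamma^{(v_1)}$ to pull $\sum_{i\in A}\delta^{(i)}$ outside the maximum over $v_1\in A$ is a slight refinement the paper does not spell out, and the obstacle you flag at the end --- rigorously translating ``a single set of users is identified'' into a condition on the maximising permutation in equation \eqref{eq:Iopt} --- is indeed left at the informal, interpretive level in the paper's own proof as well.
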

\begin{proof}
We prove the result by establishing the contraposition: if a single set
of users is always most vulnerable, then $\IGopt(U,V,x)$ is convex.
Recall the expression for $\IGopt(U,V,x)$ given in equation
\eqref{eq:Iopt}
\begin{align*}
&\IGopt(U,V,x)=\\
&\min_{v_1,\ldots, v_V}
\left(\LambdaGvone^*(x)+\sum_{i=2}^U \delta^{(v_i)}(x)
+\sum_{i=U+1}^V \gammavi(x)\right),
\end{align*}
As explained after Theorem \ref{thm:main}, for given $x$ the set
of users $\{(v_1),\ldots,(v_U)\}$ corresponds to those users whose
strings, on the scale of large deviations, will be identified by
the inquisitor after approximately $\exp(kx)$ queries. If this set
is unchanging in 
$x$, i.e. the same set of users is identified
irrespective of $x$, then both of the functions
\begin{align*}
\left(\LambdaGvone^*(x)+\sum_{i=2}^U \delta^{(v_i)}(x)\right)
\text{ and }
\sum_{i=U+1}^V \gammavi(x)
\end{align*}
are sums of functions that are convex in $x$, and so are convex
themselves. Thus the sum of them, $\IGopt(U,V,x)$, is convex.
\end{proof}

This is most readily
illustrated by an example that falls within the two-user setting
of \cite{Merhav99}, where one string is constructed from uniformly
from i.i.d. bits and the other string from non-uniformly selected
i.i.d. bytes.
\begin{figure}
\begin{center}
\includegraphics[scale=0.46]{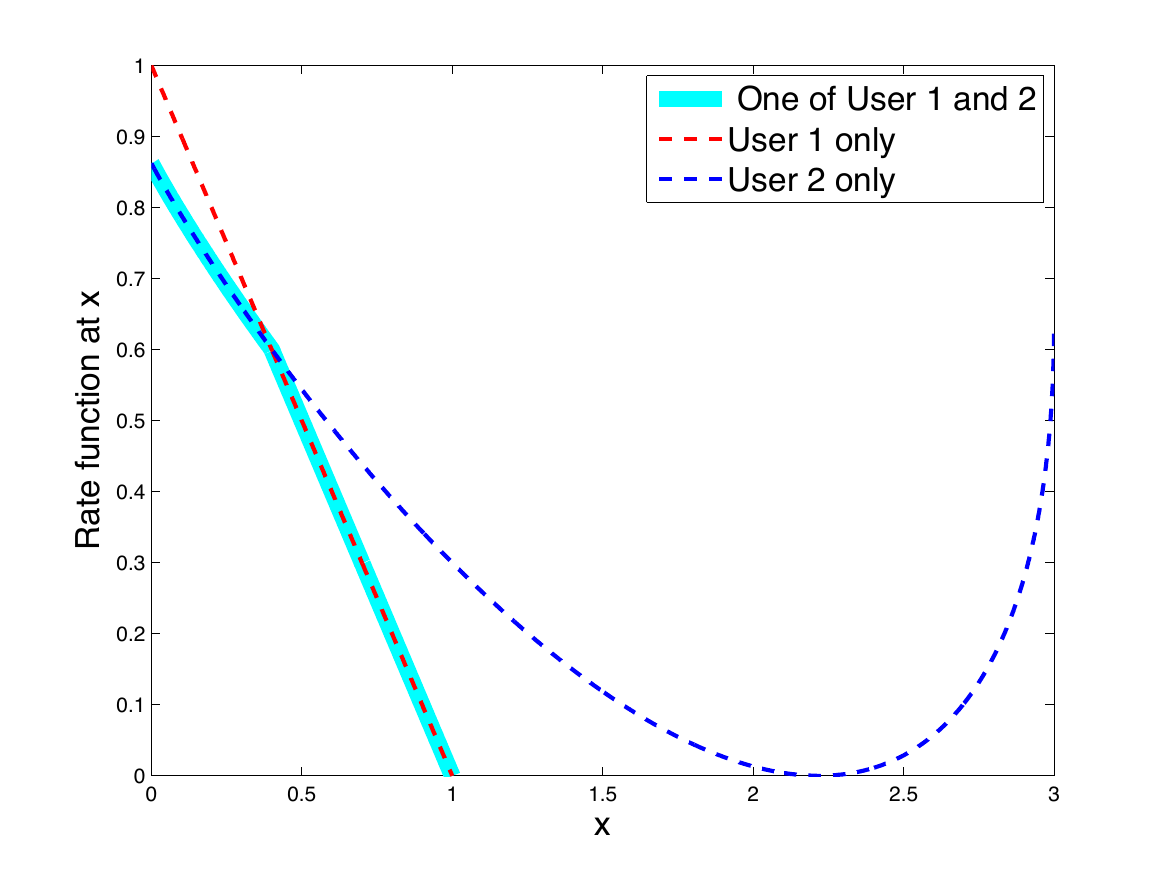}
\end{center}
\caption{User 1 picks a uniform bit string. User 2 picks a non-uniform
i.i.d. byte string. The straight line starting at $(0,1)$ displays
${\LambdaG^{(1)}}^*(x)$, the large deviations rate function for
guessing the uniform bit string. The convex function starting
below it is ${\LambdaG^{(2)}}^*(x)$, the rate function for guessing
the non-uniform byte string. The highlighted line, which is the
minimum of the two rate functions until $x=1$ and then $+\infty$
afterwards, displays $I_{\Gopt}(1, 2, x)$, as determined by
\eqref{eq:Iopt}, the rate function for an inquisitor to guess one
of the two strings. Its non-convexity demonstrates that initially
it is the bytes that are most likely to be revealed by brute force
searching, but eventually it is the uniform bits that are more
likely to be identified. The Legendre-Fenchel transform of the
scaled cumulant generating function of the guesswork distributions
would form the convex hull of the highlighted line and so this
could not be deduced by analysis of the asymptotic moments.
}
\label{fig:nonconvex}
\end{figure}

Let $\A=\{0,\ldots,7\}$, $U=1$ and $V=2$. Let one character source
correspond to the output of a cryptographically secure pseudo-random
number generator. That is, despite having a byte alphabet, the source
produces perfectly uniform i.i.d. bits,
\begin{align*}
P(W^{(1)}_1=i) &=
	\begin{cases}
	1/2 & \text{ if } i\in\{0,1\}\\
	0 & \text{ otherwise}.
	\end{cases}
\end{align*}
The other source can be thought of as i.i.d. bytes generated by a
non-uniform source,
\begin{align*}
P(W^{(2)}_1=i) &=
	\begin{cases}
	0.55 & \text{ if } i=0\\
	0.1 & \text{ if } i\in\{1,2\}\\
	0.05 & \text{ if } i\in\{3,\ldots,7\}.
	\end{cases}
\end{align*}
This models the situation of a piece of data, a string from the
second source, being encrypted with a shorter, perfectly uniform
key. The inquisitor can reveal the hidden string by guessing either
the key or the string. One might suspect that either the key or
the string is necessarily more susceptible to being guessed, but
the result is more subtle.

Figure \ref{fig:nonconvex} plots the rate functions for guessing
each of the user's strings individually as well as the rate function
for guessing one out of two, determined by equation \eqref{eq:Iopt},
which in this case is the minimum of the two rate function where
they are finite. The y-axis is the exponential decay-rate in string
length $k$ of the likelihood of identification given approximately
$\exp(k x)$ guesses, where $x$ is on the x-axis, have been made.
The rate function reveals that if the inquisitor identifies one of
the strings quickly, it will be the non-uniform byte string, but
after a certain number of guesses it is the key, the uniform bit
string, that is identified.

Attempting to obtain this result by taking the Legendre Fenchel
transform of the sCGF identified in \cite{Merhav99} results in the
convex hull of this non-convex function, which has no real meaning.
This explains the necessity for the distinct proof approach taken
here if one wishes to develop estimates on the guesswork distribution
rather than its moments.

\section{Identical Statistics Examples}
\label{sec:ident}

When the string statistics of users are asymptotically the same,
the resulting multi-user guesswork rate functions are convex by
Corollary \ref{cor:same}, and the r\^ole of specific Shannon entropy
in analyzing expected multi-user guesswork appears.  This is the
setting that leads to the results in Section \ref{sec:firstexample}
where it is assumed that character statistics are i.i.d., but not
necessarily uniform.

An alternate means of departure from string-selection uniformity
is that the appearance of characters within the string may be
correlated. The simplest model of this is where string symbols are
governed by a Markov chain with arbitrary starting distribution and
transition matrix
\begin{align*}
\left(
	\begin{array}{cc}
	1-a & a \\
	b & 1-b 
	\end{array}
\right),
\end{align*}
where $a,b\in(0,1)$. The specific R\'enyi entropy of this character
source can be evaluated, e.g. \cite{Malone042}, for $\beta\neq1$ to be
\begin{align*}
R(\beta) =& \frac{1}{1-\beta}\log
	\left(
	(1-a)^\beta+(1-b)^\beta \right. \\
	& \left.+
	\sqrt{((1-a)^\beta-(1-b)^\beta)^2 + 4(ab)^\beta } 
	\right)
	-\frac{1}{1-\beta}
\end{align*}
and $R(1)$ is the Shannon entropy
\begin{align*}
R(1) &=  
\frac{b}{a+b}H(a)+\frac{a}{a+b}H(b),
\end{align*}
where $H(a) = -a\log(a)-(1-a)\log(1-a)$.

Figure \ref{fig:gap_mark} shows $R(1/2)-R(1)$ the difference between
the average guesswork growth rate for a single user system versus
one for an arbitrarily large number of users as $a$ and $b$ are
varied. Heavily correlated sources or those with unlikely characters
give the greatest discrepancy in security. 

\begin{figure}
\begin{center}
\includegraphics[scale=0.46]{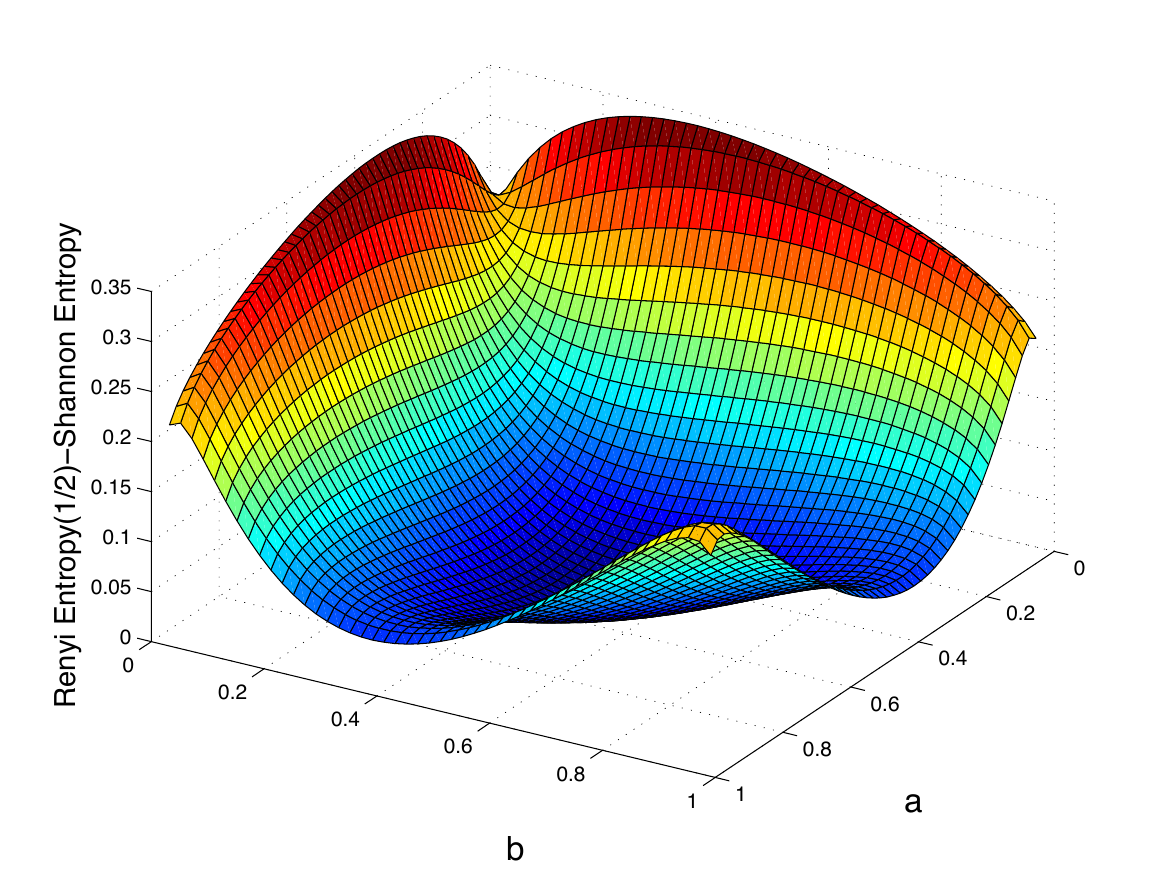}
\end{center}
\caption{Markovian string source over a binary alphabet $\A=\{0,1\}$
with $a$ being the probability of a 1 after a 0 and $b$ being the
probability of a 0 after a 1. The plots shows the difference in
average guesswork exponent for a single user system and a system
with an arbitrarily large number of users, a measure of computational
security reduction.
}
\label{fig:gap_mark}
\end{figure}
If $a=b$, then the stationary likelihood a symbol is a $0$ or $1$
is equal, but symbol occurrence is correlated. In that setting, the
string source's specific R\'enyi entropy gives for $\beta\neq1$
\begin{align*}
R(\beta) =& \frac{1}{1-\beta}\log
	\left(
	(1-a)^\beta + a^\beta
	\right),
\end{align*}
which is the same as a Bernoulli source with probability $a$ of one
character. Thus the results in Section \ref{sec:firstexample} can
be re-read with the Bernoulli string source with parameter $p=a$
substituted for a Markovian string source whose stationary distribution
gives equal weight to both alphabet letters, but for which character
appearance is correlated.

\section{Discussion}
\label{sec:conc}

Since Massey \cite{Massey94} posed the original guesswork problem
and Arikan \cite{Arikan96} introduced its long string asymptotic,
generalizations have been used to quantify the computational security
of several systems, including being related to questions of loss-less
compression. Here we have considered what appears to be one of the
most natural extensions of that theory, that of multi-user computational
security. As a consequence of the inherent non-convex nature of
the guesswork rate function unless string source statistics are
equal for all users, this development wasn't possible prior to the
Large Deviation Principle proved in \cite{Christiansen13}. The
results therein themselves relied on the earlier work that determined
the scaled cumulant generating function for the guesswork for a
broad class of process \cite{Arikan96,Malone042,Pfister04,Hanawal11}.

The fact that rate functions can be non-convex encapsulates that
distinct subsets of users are likely to be identified depending on
how many unsuccessful guesses have been made. As a result, a simple
ordering of string guessing difficulty is inappropriate in multi-user
systems and suggests that quantification of multi-user computational
security is inevitably nuanced.

The original analysis of the asymptotic behavior of single user
guesswork identified an operational meaning to specific R\'enyi
entropy. In particular, the average guesswork grows exponentially
in string length with an exponent that is the specific R\'enyi
entropy of the character source with parameter $1/2$. When users'
string statistics are the same, the generalization to multi-user
guesswork identifies a surprising operational r\^ole for specific
R\'enyi entropy with parameter $n/(n+1)$ for each $n\in\N$ when
$n$ is the excess number of strings that can be guessed. Moreover,
while the specific Shannon entropy of the string source was found
in the single user problem to have an unnatural meaning as the
growth rate of the expected logarithm of the guesswork, in the
multi-user system it arises as the universal lower bound on the
average guesswork growth rate.

For the asymptote at hand, the key message is that there is a law
of diminishing returns for an inquisitor as the number of users
increases. For a multi-user system
designer,
in contrast to the single character, single user system 
introduced in \cite{Massey94},
Shannon entropy is the appropriate measure of 
expected guesswork for systems with many users.

Future work might consider the case where the $V$ strings are not
selected independently, as was assumed here, but are instead linear
functions of $U$ independent strings. A potential application of
such a case, suggested by Erdal Arikan (Bilkent University) in a
personal communication, envisages the use of multi-user guesswork
to characterize the behavior of parallel concatenated decoders
operating on blocks of convolutionally encoded symbols passed though
a preliminary algebraic block Maximum Distance Separable (MDS) code,
e.g. \cite{shu2004}. The connection between guessing and convolutional
codes was first established by Arikan \cite{Arikan96}.

Decoding over a channel may, in general, be viewed as guessing a
codeword that has been chosen from a list of possible channel input
sequences, given the observation of an output sequence formed by
corrupting the input sequence according to some probability law
used to characterize the channel, e.g \cite{Christiansen13b}.
Considering sequential decoding of convolutional codes, first
proposed by Wozencraft \cite{Wozencraft57}, that guessing may
constitute an exploration along a decision tree of the possible
input sequences that could have led to the observed output sequence,
as modeled by Fano \cite{Fano63}.  If the transmitted rate, given
by the logarithm of the cardinality of possible codewords, falls
below the cut-off rate, then results in \cite{Arikan96} prove that
the guesswork remains in expectation less than exponential in the
length of the code. Beyond the cut-off rate, it becomes exponentially
large. One may view such a result as justifying the frequent use
of cut-off rate as as a practical, engineering characterization of
the limitations of block and convolutional codes.

Consider now the following construction of a type of concatenated
code \cite{shu2004}, which is a slight variant of that proposed by
Falconer \cite{Falconer67}. The original data, a stream of i.i.d.
symbols, is first encoded using an algebraic block MDS code. For a
block MDS code, such as a Reed-Solomon code \cite{shu2004}, over a
codeword constituted by a sequence of $V$ symbols, correct reception
at the output of any $U$ symbols from the $V$ allows for correct
decoding, where the feasibility of a pair of $V$ and $U$ depends
on the family of codes. For every $U$ input symbols in the data
stream, $V$ symbols are generated by the algebraic block MDS code.
Note that these symbols may be selected over a set of large
cardinality, for instance by taking each symbol to be a string of
bits. As successive input blocks of length $U$ are processed by the
block MDS code, these symbols form $V$ separate streams of symbols.
Each of these $V$ streams emanating from the algebraic block MDS
code is coded using a separate but identical convolutional encoder.

The $V$ convolutional codewords thus obtained are dependent, even
though any $U$ of them are mutually independent. This dependence
is imputed by the fact that the $V$ convolutional codewords are
created by $U$ original streams that form the input of the block
MDS encoder. The $V$ convolutional codewords constitute then the
inputs to $V$ mutually independent, Discrete Memoryless Channels
(DMCs), all governed by the same probability law. In Falconer's
construct, such parallel DMCs are embodied by time-sharing equally
a single DMC. While Falconer envisages independent DMCs governed
by a single probability law, as is suitable in the setting of
interleaving over a single DMC, we may readily extend the scheme
to the case where the parallel DMCs have different behaviors. Such
a model is natural in wireless settings where several channels are
used in parallel, say over different frequencies. While the behavior
of such channels is often well modeled as being mutually independent,
and the channels individually are well approximated as being DMCs,
the characteristics of the channels, which may vary slowly in time,
generally differ considerably from each other at any time.

Decoding
uses the outputs of the $V$ DMCs as follows. For each DMC,
the output is initially individually decoded using sequential
decoding so that, in the words of Falconer, "controlled by the Fano
algorithm, all $[V]$ sequential decoders simultaneously and
independently attempt to proceed along the correct path in their
own trees". The dependence among the streams produced by the original
application of the block MDS code entails that, when $U$ sequential
decoders each correctly guesses a symbol, the correct guesses
determine a block of $U$ original data symbols. The latter are
communicated to all remaining $V-U$ sequential decoders, eliminating
the need for them to continue producing guesses regarding that block
of $U$ original data symbols. The sequential decoders then proceed
to continue attempting to decode the next block of $U$ original
data symbols. This scheme allows the $U$ most fortunate guesses out
of $V$ to dominate the performance of the overall decoder. A
sequential decoder that was a laggard for one block of the original
$U$ symbols may prove to be a leader for another block of $U$
symbols.

\section*{Acknowledgments:}
The authors thank Erdal Arikan (Bilkent University) for informative
feedback and for pointing out the relationship between multi-user
guesswork and sequential decoding. 
They also thank the anonymous reviewers for their feedback 
on the paper.

% Generated by IEEEtran.bst, version: 1.13 (2008/09/30)


\begin{thebibliography}{10}
\providecommand{\url}[1]{#1}
\csname url@samestyle\endcsname
\providecommand{\newblock}{\relax}
\providecommand{\bibinfo}[2]{#2}
\providecommand{\BIBentrySTDinterwordspacing}{\spaceskip=0pt\relax}
\providecommand{\BIBentryALTinterwordstretchfactor}{4}
\providecommand{\BIBentryALTinterwordspacing}{\spaceskip=\fontdimen2\font plus
\BIBentryALTinterwordstretchfactor\fontdimen3\font minus
  \fontdimen4\font\relax}
\providecommand{\BIBforeignlanguage}[2]{{%
\expandafter\ifx\csname l@#1\endcsname\relax
\typeout{** WARNING: IEEEtran.bst: No hyphenation pattern has been}%
\typeout{** loaded for the language `#1'. Using the pattern for}%
\typeout{** the default language instead.}%
\else
\language=\csname l@#1\endcsname
\fi
#2}}
\providecommand{\BIBdecl}{\relax}
\BIBdecl

\bibitem{malone12}
D.~Malone and K.~Maher, ``Investigating the distribution of password choices,''
  in \emph{WWW}, 2012, pp. 301--310.

\bibitem{Christiansen13a}
M.~M. Christiansen, K.~R. Duffy, F.~P. Calmon, and M.~M{\'e}dard, ``Brute force
  searching, the typical set and guesswork,'' in \emph{IEEE Int. Symp. Inf
  Theory}, 2013, pp. 1257--1261.

\bibitem{Massey94}
J.~L. Massey, ``Guessing and entropy,'' \emph{IEEE Int. Symp. Inf Theory}, pp.
  204--204, 1994.

\bibitem{Arikan96}
E.~Arikan, ``An inequality on guessing and its application to sequential
  decoding,'' \emph{IEEE Trans. Inf. Theory}, vol.~42, no.~1, pp. 99--105,
  1996.

\bibitem{Malone042}
D.~Malone and W.~Sullivan, ``{G}uesswork and entropy,'' \emph{IEEE Trans. Inf.
  Theory}, vol.~50, no.~4, pp. 525--526, 2004.

\bibitem{Pfister04}
C.-E. Pfister and W.~Sullivan, ``R\'{e}nyi entropy, guesswork moments and large
  deviations,'' \emph{IEEE Trans. Inf. Theory}, no.~11, pp. 2794--00, 2004.

\bibitem{Hanawal11}
M.~K. Hanawal and R.~Sundaresan, ``Guessing revisited: A large deviations
  approach,'' \emph{IEEE Trans. Inf. Theory}, vol.~57, no.~1, pp. 70--78, 2011.

\bibitem{Lewis95}
J.~T. Lewis and C.~E. Pfister, ``Thermodynamic probability theory: some aspects
  of large deviations,'' \emph{Russ. Math. Surv.}, vol.~50, no.~2, pp.
  279--317, 1995.

\bibitem{Dembo98}
A.~Dembo and O.~Zeitouni, \emph{Large Deviations Techniques and
  Applications}.\hskip 1em plus 0.5em minus 0.4em\relax Springer-Verlag, 1998.

\bibitem{Christiansen13}
M.~M. Christiansen and K.~R. Duffy, ``Guesswork, large deviations and {S}hannon
  entropy,'' \emph{IEEE Trans. Inf. Theory}, vol.~59, no.~2, pp. 796--802,
  2013.

\bibitem{oliveira12}
P.~F. Oliveira, L.~Lima, T.~T.~V. Vinhoza, J.~Barros, and M.~M{\'e}dard,
  ``Coding for trusted storage in untrusted networks,'' \emph{IEEE Trans. Inf.
  Forensics Security}, vol.~7, no.~6, pp. 1890--1899, 2012.

\bibitem{Calmon12}
F.~du~Pin~Calmon, M.~M\'edard, L.~Zeger, J.~Barros, M.~Christiansen, and
  K.~Duffy, ``Lists that are smaller than their parts: A coding approach to
  tunable secrecy,'' in \emph{Proc. $50^{\rm th}$ Allerton Conference}, 2012.

\bibitem{Merhav99}
N.~Merhav and E.~Arikan, ``The {S}hannon cipher system with a guessing
  wiretapper,'' \emph{IEEE Trans. Inform. Theory}, vol.~45, no.~6, pp.
  1860--1866, 1999.

\bibitem{Pliam00}
J.~Pliam, ``On the incomparability of entropy and marginal guesswork in
  brute-force attacks,'' in \emph{INDOCRYPT}, 2000, pp. 67--79.

\bibitem{Hanawal11a}
M.~K. Hanawal and R.~Sundaresan, ``The {S}hannon cipher system with a guessing
  wiretapper: General sources,'' \emph{IEEE Trans. Inform. Theory}, vol.~57,
  no.~4, pp. 2503--2516, 2011.

\bibitem{Arikan02}
E.~Arikan and S.~Boztas, ``Guessing with lies,'' in \emph{Proc. International
  Symp. on Inf. Th.}, 2002.

\bibitem{Arikan98}
E.~Arikan and N.~Merhav, ``Guessing subject to distortion,'' \emph{IEEE Trans.
  Inf. Theory}, vol.~44, pp. 1041--1056, 1998.

\bibitem{Sundaresan07b}
R.~Sundaresan, ``Guessing under source uncertainty,'' \emph{IEEE Trans. Inf.
  Theory}, vol.~53, no.~1, pp. 269--287, 2007.

\bibitem{Sundaresan06}
------, ``On guessing the realization of an arbitrarily varying source,'' in
  \emph{Proc. National Conf. on Communication}, 2006.

\bibitem{Beirami15}
A.~Beirami, R.~Calderbank, K.~R. Duffy, and M.~M\'edard, ``Computational
  security subject to source constraints, guesswork and inscrutability,'' in
  \emph{Proc. International Symp. on Inf. Th.}, 2015.

\bibitem{Hayashi06}
Y.~Hayashi and H.~Yamamoto, ``Coding theorems for the {Shannon} cipher system
  with a guessing wiretapper and correlated source outputs,'' \emph{IEEE Trans.
  Inf. Th.}, vol.~54, no.~6, pp. 2808--2817, 2008.

\bibitem{Haroutunian00}
E.~A. Haroutunian and A.~R. Ghazaryan, ``Guessing subject to distortion and
  reliability criteria,'' \emph{Math. Prob. of {CS}}, vol.~21, pp. 12--26,
  2000.

\bibitem{Haroutunian12}
E.~A. Haroutunian and T.~Margaryan, ``The {Shannon} cipher system with a
  guessing wiretapper eavesdroping through a noisy channel,'' in \emph{Proc.
  TELFOR}, 2012.

\bibitem{Sundaresan07a}
R.~Sundaresan, ``Guessing based on length functions,'' in \emph{Proc.
  International Symp. on Inf. Th.}, 2007.

\bibitem{Hanawal11b}
M.~K. Hanawal and R.~Sundaresan, ``Guessing and compression subject to
  distortion,'' Division of Electrical Sciences, Indian Institute of Science,
  Bangalore, Tech. Rep., 2010.

\bibitem{Christiansen13b}
M.~M. Christiansen, K.~R. Duffy, F.~P. Calmon, and M.~M{\'e}dard, ``Guessing a
  password over a wireless channel (on the effect of noise non-uniformity),''
  in \emph{Asilomar Conference on Signals, Systems \& Computers}, 2013.

\bibitem{Bunte14}
C.~Bunte and A.~Lapidoth, ``Encoding tasks and {R}{\'{e}}nyi entropy,''
  \emph{{IEEE} Transactions on Information Theory}, vol.~60, no.~9, pp.
  5065--5076, 2014.

\bibitem{Lehmann55}
E.~L. Lehmann, ``Ordered families of distributions,'' \emph{Ann. Math.
  Statist.}, vol.~26, pp. 399--419, 1955.

\bibitem{denuit06}
M.~Denuit, J.~Dhaene, M.~Goovaerts, and R.~Kaas, \emph{Actuarial Theory for
  Dependent Risks: Measures, Orders and Models}.\hskip 1em plus 0.5em minus
  0.4em\relax Wiley, 2006.

\bibitem{shu2004}
S.~Lin and D.~J. Costello, \emph{Error control coding: fundamentals and
  applications}.\hskip 1em plus 0.5em minus 0.4em\relax Pearson-Prentice Hall,
  2004.

\bibitem{Wozencraft57}
J.~M. Wozencraft, ``Sequential decoding for reliable communications,'' Ph.D.
  dissertation, M.I.T., Cambridge, Massachusetts, 1957.

\bibitem{Fano63}
R.~Fano, ``A heuristic discussion of probabilistic decoding,'' \emph{IEEE
  Trans. Inf. Theory}, vol.~9, no.~4, pp. 64--74, 1963.

\bibitem{Falconer67}
D.~D. Falconer, ``A hybrid sequential and algebraic decoding scheme,'' Ph.D.
  dissertation, M.I.T., Cambridge, Massachusetts, 1966.

\end{thebibliography}
\end{document}